\definecolor{darkgreen}{RGB}{0,100,0}
\definecolor{firebrick}{RGB}{178,34,34}
\definecolor{Andre_blue}{RGB}{0,120,220}
\definecolor{Andre_gray}{RGB}{180,200,200}
\crefname{hypothesis}{Hypothesis}{Hypotheses}
\definecolor{darkgreen}{RGB}{0,100,0}
\newcommand{\dgm}{\mathrm{Dgm}}
\newcommand\Sp{\mathbb{S}}
\newcommand\R{\mathbb{R}}
\newcommand{\M}{\mathcal{M}}
\newcommand{\Per}{\mathscr{D}}
\newcommand{\PF}{\operatorname{CV}}
\newcommand{\Rspace}       {{\mathbb R}}
\newcommand{\Edist}[2]     {\|{#1}-{#2}\|}
\newcommand{\Bottleneck}   {{W_\infty}}
\newcommand{\Dgm}[2]       {{\rm Dgm}_{#1}{({#2})}}
\newcommand{\Skip}[1]      {}
\pgfplotsset{compat=1.18} 
\newcommand\footnoteref[1]{\protected@xdef\@thefnmark{\ref{#1}}\@footnotemark}
\begin{document}

\newcommand\relatedversion{}
\renewcommand\relatedversion{\thanks{The full version of the paper can be accessed at \protect\url{https://arxiv.org/abs/0000.00000}}} %

\title{Braiding Vineyards}
    
\author{Erin W.~Chambers\thanks{Department of Computer Science and Engineering, University of Notre Dame, Notre Dame, IN (\email{echambe2@nd.edu},  \url{https://wolfchambers.github.io/}).}
\and Christopher Fillmore\thanks{Institute of Science and Technology Austria, Klosterneuburg, Austria (\email{cdfillmore@gmail.com},  \url{https://orcid.org/0000-0001-7631-2885}).}    
\and Elizabeth Stephenson\thanks{Orteliu, Oslo, Norway
(\email{elizasteprene@gmail.com}, \url{https://orcid.org/0000-0002-6862-208X}).}
\and Mathijs Wintraecken\thanks{Inria Centre Universit{\'e} C{\^o}te d'Azur, Sophia-Antipolis, France
(\email{mathijs.wintraecken@inria.fr}, \url{https://orcid.org/0000-0002-7472-2220}).}}

\date{}

\maketitle

\begin{abstract} 
In this work, we introduce and study what we believe is an intriguing and, to the best of our knowledge, previously unknown connection between two fundamental areas in computational topology, namely topological data analysis (TDA) and knot theory.  
Given a  function from a topological space to $\mathbb{R}$, TDA provides tools to simplify and study the importance of topological features: in particular, the $l^{th}$-dimensional persistence diagram encodes the topological changes (or $l$-homology) in the sublevel set as the function value increases into a set of points in the plane. Given a continuous one-parameter family of such functions, we can combine the persistence diagrams into an object known as a vineyard, which track the evolution of points in the persistence diagram as the function changes.  If we further restrict that family of functions to be periodic, we identify the two ends of the vineyard, yielding a closed vineyard.  This allows the study of monodromy, which in this context means that following the family of functions for a period permutes the set of points in a non-trivial way.   Recent work has  studied monodromy in the directional persistent homology transform, demonstrating some interesting connections between an input shape and monodromy in the persistent homology transform for 0-dimensional homology embedded in $\mathbb{R}^2$. 

In this work, given a link and a value $l$, we construct a topological space (based on the given link) and periodic family of functions on this space (based on the Euclidean distance function), such that the closed $l$-vineyard contains this link. 
This shows that vineyards are topologically as rich as one could possibly hope, suggesting many future directions of work.  Importantly, it has at least two immediate consequences we explicitly point out:
\begin{itemize}
    \item Monodromy of any periodicity can occur in a $l$-vineyard for any $l$. This answers a variant of a question by Arya and collaborators \cite{Arya2024}. To exhibit this as a consequence of our first main result we also reformulate monodromy in a more geometric way, which may be of interest in itself. 
    \item Topologically distinguishing closed vineyards is likely to be difficult (from a complexity theory as well as a practical perspective) because of the difficulty of knot and link recognition, which have strong connections to many NP-hard problems.
\end{itemize}

\end{abstract}

\section{Introduction}

Computational topology has a number of different branches which, despite some overlap, have remained fairly distinct in nature. On the one hand there is persistent homology~\cite{Frosini1990,Robins1999,Edelsbrunner2002}, a more recent and active area which has its roots in algebraic topology; see~\cite{Oudot2015,Dey2022} for recent survey books on this active topic. On the other hand, there are the computational aspects of knot theory and the study and characterization of low dimensional manifold topology, which has a long history of algorithmic development, perhaps dating back to Dehn's algorithm~\cite{Dehn1911,Dehn1912} and the many following algorithmic results in more recent decades on shape and knot recognition in low dimensions~\cite{Rubinstein92, thompson1994thin, hass1997algorithms, birman1999new, Burton2020}.

The first goal of this paper is to study an interesting new link between these two important branches of computational topology.  Thanks to the work of Alexander~\cite{Alexander1923} we know that every knot or link can be represented as a braid, that is, for every link there is a braid such that if we glue the ends of the braid, we recover the link; see Section~\ref{ssec:knotsandbraids} for more formal definitions. Recall also that for a continuous one parameter family of filtrations, we can ``stack'' the persistence diagrams of these filtrations; we call the resulting object a vineyard \cite{CohenSteiner2006, turner2023representing,Hickok2022b}. Thanks to the stability of persistence diagrams, the points in the persistence diagram move continuously (even Lipschitz continuously) with the parameter. This means that we can follow a point in (the stack of)  the persistence diagrams; the resulting curve is called a vine.
We will prove that for every link there exists an embedded manifold and a family of functions on $\M$ (where each function is induced by the distance to a point in the ambient space, and where in turn each point comes from a curve $\gamma$) such that the vineyard of the family of functions yields the braid representing the link in the sense of Alexander. 

The second goal of this paper is to show that any type of monodromy can occur in vineyards. This is part of a new research direction in computational topology. In~\cite{Arya2024}, the occurrence of monodromy in the context of the directional persistence transform is studied, more precisely for 0-dimensional persistence modules of objects embedded in $\mathbb{\R}^2$. 
Roughly speaking, the directional persistence transform considers the persistence diagram of the height function on a shape for any direction, which in two dimensions gives what we call a closed vineyard (whose precise definition will be given below). 
It should be noted that monodromy had already been identified much earlier in the context of multidimensional persistence in \cite{Cerri2013}, see also \cite{AATRNsara}. 
The authors of~\cite{Arya2024} conclude with an open question about demonstrating monodromy in higher dimensions, as well as several interesting and more open ended questions related to better understanding what monodromy is capturing about the input shape.  

The setting of~\cite{Arya2024} is the directional persistence transform, however this has been placed in a larger context \cite{onus2024shovingtubesshapesgives}, where one studies the distance to flats of any dimension, including points. In this paper, we focus on that extreme case, namely points. The resulting \emph{radial transform} allows for a more geometric understanding of monodromy, and is critical in our construction.

In this paper, we exhibit that any type of monodromy and the braid associated to any link can occur in a vineyard. To make our statement more precise, however, we need to introduce some nomenclature, although full definitions will be deferred until Section~\ref{sec:prelims}. 
Intuitively, monodromy is the effect where if one makes a loop in a base space of a covering or fibre bundle, the lifted curve may not end up at the same point as you started out with. We say that the monodromy is of period $2\pi k$ (with $k>0$) if the lifted curve returns to the starting point after $k$ revolutions in the base space.
In our context the base space is a closed curve or loop $\gamma: [0,2 \pi] \to \mathbb{R}^d$, it is into this image $\mathbb{R}^d$ that we have embedded a manifold $\M$ (link or some offset of link, which is a modification of the input link). The fibres are the persistence diagrams of the Euclidean distance function restricted to the manifold $\M$, that is $d_\mathbb{E} (x, \gamma(t))|_\M$. The bundle therefore is the vineyard. 
The lifted curve is a vine $\tilde{\gamma}_{(b_0,d_0)} (t)$ in the vineyard starting at $(b_0,d_0)$ in the persistence diagram of $d (x, \gamma(0))_\M$ and the periodicity is the smallest $k>0$, such that for all $i$, $\tilde{\gamma}_{(b_0,d_0)} (0)= \tilde{\gamma}_{(b_i,d_i)} (2 \pi k)$, where we assume that the vine is non-degenerate in the sense that it stays away from the diagonal.

More precisely the two main statements are: 
\begin{restatable}[Any link occurs in some closed vineyard]{theorem}{thmLink}
\label{thm:Link}
Given a link {$\mathcal{X}$}, $d,l \in \mathbb{Z}_{\geq 0}$, with $d \geq 3$ and $l<d-2$,  then there exists a submanifold $\M \subset \mathbb{R}^d$ and a closed curve $\gamma \subset \mathbb{R}^d$ such that identifying the ends of the $l$-vineyard of $d (x, \gamma(t))|_\M$ yields a link which contains the given link as a subset. That is, the vineyard is topologically equivalent to the link we were given after removing some spurious connected components, i.e. the output is a link $\mathcal{Z}= \tilde{\mathcal{X}} \sqcup \mathcal{Y} $, where $\sqcup$ denotes the disjoint union of different connected components and $\tilde{\mathcal{X}}$ is ambiently isotopic to $\mathcal{X}$. 
\end{restatable}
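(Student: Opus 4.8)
The plan is to reduce the statement, via Alexander's theorem, to the problem of realizing an arbitrary braid as the trajectory of finitely many persistence points, and then to produce those trajectories as a radial (distance-to-a-point) transform. First I would invoke Alexander's theorem to write $\mathcal{X}$ as the closure of a geometric braid $\beta$ on $n$ strands, i.e.\ $n$ disjoint arcs $s_1,\dots,s_n\colon[0,2\pi]\to D^2$ with $\{s_i(0)\}=\{s_i(2\pi)\}$ as a set, whose closure in the solid torus $D^2\times S^1$ is ambiently isotopic to $\mathcal{X}$. A closed $l$-vineyard naturally lives in (the birth--death half-plane)$\times S^1$, a solid-torus-like space, and identifying its ends is exactly the braid-closure operation. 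Hence it suffices to arrange that the $l$-dimensional persistence of $d(x,\gamma(t))|_{\M}$ has, for each $t$, exactly $n$ bounded points located at the strand cross-sections $s_1(t),\dots,s_n(t)$. I would stress the point that the nontrivial underlying permutation of $\beta$ — which is unavoidable, e.g.\ any nontrivial one-component knot forces a cyclic permutation — must appear as genuine \emph{monodromy}: although $\gamma$ is a loop and the diagram at $t=2\pi$ agrees with the diagram at $t=0$ as a set, the persistence points are unlabelled, so an individual vine may return to a \emph{different} point of that set after one period. This is precisely the radial-transform monodromy, and it is what makes permutations (and hence knots) realizable.

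Next I would construct the local features that carry the vines. Using the hypothesis $l<d-2$, so that $d\ge l+3$ and there is room to embed $(l+1)$-spheres with codimension at least two, I would build, in a small ball, a gently dimpled sphere $S^{l+1}\subset\mathbb{R}^d$ for which the distance function from a nearby basepoint is Morse with critical indices $0,l,l+1,l+1$. A short Euler-characteristic check ($1-(-1)^l$ matches $S^{l+1}$) shows such a feature contributes \emph{exactly one} bounded $l$-dimensional persistence point, born at the index-$l$ value and dying at the paired index-$(l+1)$ value, together with essential classes in dimensions $0$ and $l+1$ that do not affect the bounded part of the $l$-diagram. Shaping the dimple anisotropically makes the birth--death pair a controllable function of the basepoint's position, so that sweeping the basepoint traces a prescribed curve in the open region $\{b<d\}$.

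I would then assemble $\M$ as an embedded submanifold built from an offset of the braid $\beta$: $n$ such features routed along the $n$ strands and joined through the extra codimension so that, for a suitable base loop $\gamma$, the radial transform realizes $\beta$. Concretely, as $\gamma(t)$ traverses its loop the nearest-feature structure rotates according to $\beta$, so the bounded $l$-persistence points move along the $s_i(t)$ and, crucially, the pairing of critical points reassigns over the period to produce exactly the permutation of $\beta$. Separation of scales keeps the $n$ bounded points mutually distinct and off the diagonal, while any additional $l$-cycles created by the connecting or offset geometry are arranged to be either essential or confined to a region disjoint from the braid; upon closing the vineyard these form the spurious sublink $\mathcal{Y}$. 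Identifying $t=0$ with $t=2\pi$ then joins the vines, by the monodromy, into the closure of $\beta$, which is ambiently isotopic to $\mathcal{X}$; the total output is $\mathcal{Z}=\tilde{\mathcal X}\sqcup\mathcal{Y}$ as claimed.

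The hard part will be the assembly step: engineering a single pair $(\M,\gamma)$ whose radial transform reproduces an \emph{arbitrary} braid word — not merely linking of separate loops, but the full braiding together with its permutation realized as clean monodromy. Three things must be checked simultaneously: that the $n$ bounded persistence points follow the prescribed two-dimensional trajectories $s_i(t)$ (which forces the dimpled features to be anisotropic and carefully oriented along the strands); that the critical-pair assignment migrates correctly over a full period, so that following a vine implements the braid's permutation rather than returning each point to itself; and that the persistence diagram stays clean, i.e.\ the bounded points never coincide (no unintended vine crossings), never touch the diagonal (non-degeneracy), and the global persistence decomposes into the designed points plus controllable spurious classes. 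Establishing the separation and Morse-theoretic estimates that guarantee all three is where the substance of the proof lies.
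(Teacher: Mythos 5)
Your high-level architecture matches the paper's: Alexander's theorem, the observation that closing a vineyard is the braid-closure operation, an $(l+1)$-dimensional feature per strand made possible by $l<d-2$, and monodromy realizing the braid's permutation. But the proposal stops exactly where the proof has to begin, and it misses the two specific obstructions the paper's argument is built to overcome. First, the \emph{elder rule}: you write as if one can independently prescribe the $n$ birth trajectories and then ``arrange'' the critical-pair assignment, but the pairing between births and deaths in a persistence diagram is not a free parameter --- it is dictated by the merge structure (oldest class survives), so in particular the first birth is always paired with the last death. This forces a surgery on the would-be braid (one vine per component is cut and rejoined the ``wrong'' way), which is why the paper must add an extra trivial loop to each component via a Reidemeister~I move so that the elder-rule pairing splits off as the spurious sublink $\mathcal{Y}$, and why the remaining pairings must be forced by explicit orderings of the death values (the conditions \eqref{eq:DeathConditions}). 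Your $\mathcal{Y}$ is attributed to stray cycles from the connecting geometry, not to this pairing constraint, so the mechanism that actually produces the correct vine-to-strand correspondence is absent.

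Second, the proposal gives no mechanism for controlling births and deaths \emph{independently}, which is what determines whether a vine crossing is an over- or under-crossing. A single local ``dimpled sphere'' near the observation point has its birth and death both governed by the same local geometry, so prescribing the birth trajectory $s_i(t)$ and the crossing sign simultaneously is exactly the coupled problem you cannot solve locally. The paper's key geometric idea is to decouple them spatially: by Lemma~\ref{Lem:Step2} the critical points of $d(\cdot,\gamma(t))|_{\M}$ cluster into births near $\gamma(t)$ and deaths near the antipodal point $\gamma(t+\pi)$, and by twisting the annulus $90^\circ$ the radial coordinate near $\gamma(t)$ (controlling births) and the $\tilde\psi$ coordinate near $\gamma(t+\pi)$ (controlling deaths) can be perturbed independently. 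Relatedly, if your $n$ dimpled features are genuinely disjoint there is no monodromy at all (each persistence point returns to itself), and if they are joined along the braid closure you have essentially rebuilt the paper's offset $\M\cong\mathbb{S}^{l+1}\times\mathbb{S}^1$ per component --- at which point you still owe the clustering and Morse-index analysis (Lemmas~\ref{Lem:Step2}--\ref{Lem:Step3}) that makes the diagram clean. You are candid that the assembly is ``where the substance lies,'' and that assessment is correct: as written, the proposal is a correct plan but not a proof.
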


\begin{restatable}[Any monodromy occurs in some closed vineyard]{theorem}{thmMonodromy}
\label{lem:ouroboros} \label{thm:Monodromy} 
The radial transform in $\mathbb{R}^d$ can exhibit monodromy for persistence up to the ($d-2$)th homology and for extended persistence up to the ($d-1$)th homology. Moreover the periodicity of the monodromy can be $2k \pi$ for any $k \in \mathbb{Z}_{\geq 2}$.
\end{restatable}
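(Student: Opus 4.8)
The plan is to derive Theorem~\ref{thm:Monodromy} from Theorem~\ref{thm:Link} by first translating monodromy into a statement about how the closed vineyard winds around its parameter circle. After identifying the two ends of the vineyard of $d(x,\gamma(t))|_\M$, the resulting object lives in the product of the persistence plane with the circle $S^1 = [0,2\pi]/(0\sim 2\pi)$, and projecting to the $S^1$ factor exhibits the closed vineyard as a one-dimensional family over $S^1$. Tracking a vine is precisely lifting this projection, so the monodromy permutation on a fibre is the first-return map of the vines. I would record the resulting dictionary: a vine has period $2k\pi$ exactly when the connected component of the closed vineyard containing it projects to $S^1$ with winding number $k$ and its strands form a single orbit of length $k$ under the return map. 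Equivalently, the monodromy period equals the length of the corresponding cycle of the permutation that the closed braid induces on its strands. This is the geometric reformulation of monodromy alluded to above, and it reduces the theorem to realising, in each admissible homology degree, a closed vineyard whose braid has a strand-cycle of length $k$.

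For the bulk of the degrees I would simply invoke Theorem~\ref{thm:Link}. Take $\mathcal{X}$ to be the closure of the $k$-strand braid $\sigma_1\sigma_2\cdots\sigma_{k-1}$, whose underlying permutation is the single $k$-cycle $(1\,2\,\cdots\,k)$; as a link this is a single unknotted component, but realised as a closed braid it winds $k$ times around the axis. Since the vineyard of the constructed family yields the Alexander braid of $\mathcal{X}$ faithfully, and not merely its link type, the strand permutation, and hence the $k$-cycle, survives into the closed $l$-vineyard. By the dictionary above, the vines in the component $\tilde{\mathcal{X}}$ return to their starting point only after $k$ revolutions, giving monodromy of period exactly $2k\pi$; the possibly-present spurious components $\mathcal{Y}$ occupy different connected components and do not interfere. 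Because $\sigma_1\cdots\sigma_{k-1}$ is defined for every $k\ge 2$, this yields monodromy of every period $2k\pi$, $k\ge 2$, in ordinary persistence for every degree $l\le d-3$ (i.e. $l<d-2$).

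It remains to reach the two extremal degrees, ordinary persistence in degree $d-2$ and extended persistence in degree $d-1$, which lie just beyond the range of Theorem~\ref{thm:Link}. Here I would use that the submanifold $\M$ produced by the construction can be taken to be a closed hypersurface (an offset of the link), so that the distance-to-point function on $\M$ carries a top-dimensional essential class in degree $d-1$: this class never dies in ordinary persistence, so ordinary finite pairs reach up to degree $d-2$, while the essential class is paired only in extended persistence, producing an extended pair in degree $d-1$. To transport the winding-$k$ structure into these top degrees cleanly, I would invoke the Poincar\'e--Lefschetz duality for the ordinary, relative and extended persistence of a function on a closed manifold, which identifies the diagram in a given degree with the diagram in the complementary degree; since this identification is natural in the parameter $t$, it carries the first-return permutation, and hence the cycle of length $k$, to the dual degree, yielding monodromy of period $2k\pi$ there as well.

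The main obstacle I anticipate is twofold. First, one must certify that the period is \emph{exactly} $2k\pi$ and not a proper divisor, i.e. that the tracked vine genuinely traverses the full $k$-cycle without ever colliding with the diagonal; the Lipschitz stability of vines established in the proof of Theorem~\ref{thm:Link} should keep the relevant vine uniformly bounded away from the diagonal through all $k$ revolutions, so that the return map is a well-defined permutation of non-degenerate points. Second, and more delicate, is the bookkeeping for the extremal degrees: one must verify that the essential top class of the hypersurface $\M$ behaves as claimed for the distance-to-point function and, above all, that the duality correspondence is compatible with the end-identification of the vineyard, so that a single $k$-cycle in one degree produces a single $k$-cycle in the dual degree rather than splitting or merging vine-orbits.
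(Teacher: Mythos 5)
Your core argument is the same as the paper's: the paper's entire proof is to apply Theorem~\ref{thm:Link} to the (offset of the) ouroboros, i.e.\ the closure of a braid with $k+1$ strands and $k$ over crossings, which after the elder-rule surgery leaves a single vine winding $k$ times; your closure of $\sigma_1\cdots\sigma_{k-1}$ on $k$ strands, with the extra loop that the construction of Theorem~\ref{thm:Link} adds, is exactly that object. Two of your elaborations are worth commenting on. First, you are right, and more explicit than the paper, that one must use the fact that the construction reproduces the braid (hence the winding number and the strand permutation) and not merely the ambient isotopy class of the link stated in Theorem~\ref{thm:Link}; an isotopy in $\R^3$ could unwind the unknot and destroy the period, so this strengthening is genuinely needed and is supplied by the proof, not the statement, of Theorem~\ref{thm:Link}. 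Second, on the extremal degrees: the paper's one-line proof does not address the mismatch between $l<d-2$ in Theorem~\ref{thm:Link} and the claims for ordinary degree $d-2$ and extended degree $d-1$, so your attempt to fill this is welcome, but the duality step does not work as you describe for the ordinary degree-$(d-2)$ case. Poincar\'e--Lefschetz duality for extended persistence exchanges the \emph{ordinary} subdiagram in degree $l$ with the \emph{relative} subdiagram in degree $\dim\M-l$ (and extended with extended, reflected); it never identifies two ordinary subdiagrams, so it cannot transport monodromy from ordinary degree $l\le d-3$ to ordinary degree $d-2$. It does, however, correctly carry the ordinary $H_0$ monodromy of a closed hypersurface offset to the relative part of the degree-$(d-1)$ extended diagram, which handles that case. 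For ordinary degree $d-2$ the natural repair is to run the offset construction directly at $l=d-2$, where $\M=\Sp^1\times\Sp^{d-2}$ is a hypersurface in $\R^d$: Lemmas~\ref{Lem:Step2} and~\ref{Lem:Step3} and Corollary~\ref{Lem:Step4} do not use the strict inequality $l<d-2$, so the births at the index-$(d-2)$ saddles and deaths at the maxima still reproduce the winding, and the period-$2k\pi$ conclusion goes through.
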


\begin{figure}[h!]
    \centering
    \includegraphics[width=0.95\linewidth]{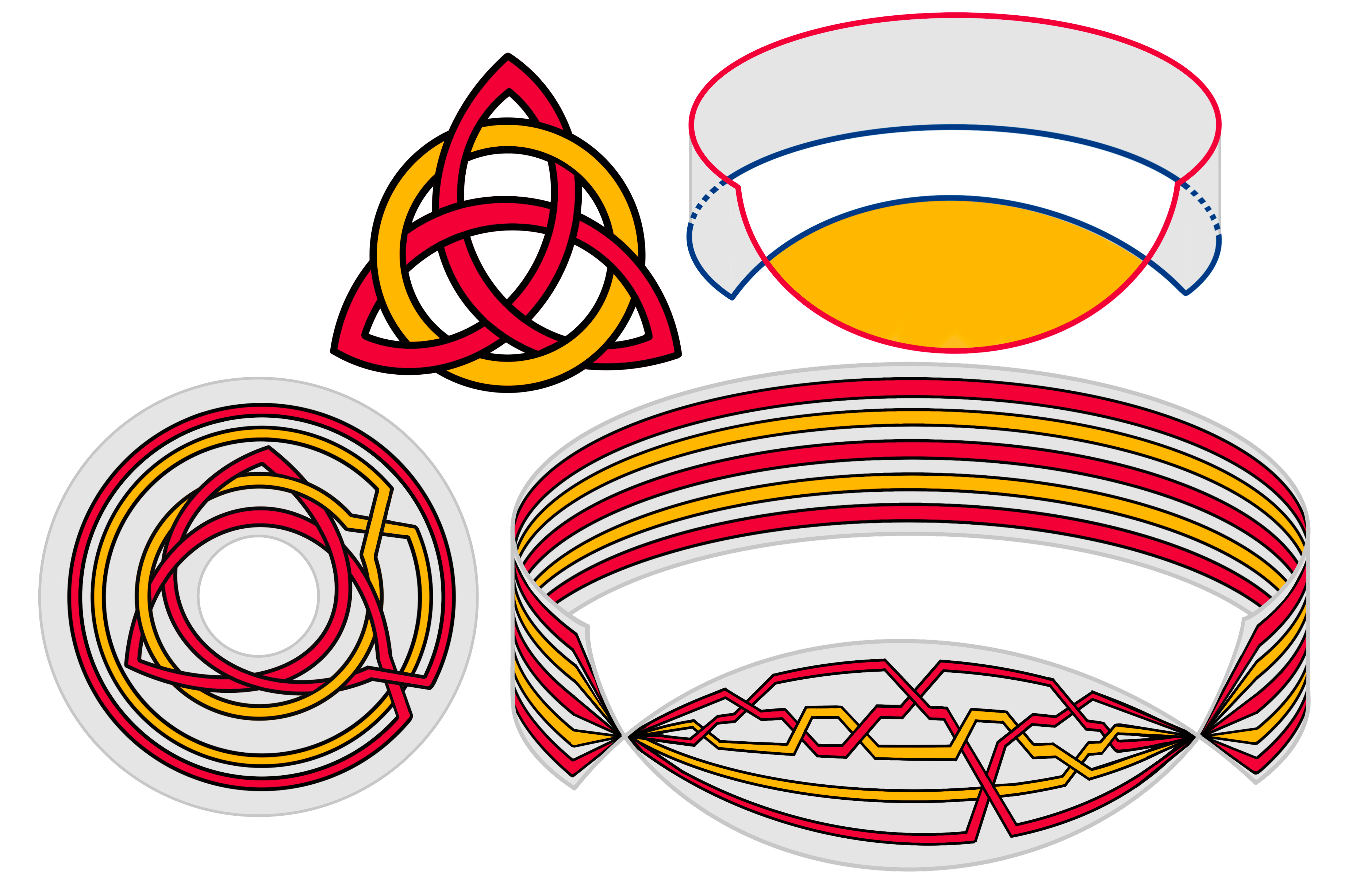}
    \caption{ For a given a link (top left), we construct a closed braid (that is, a braid with ends identified) embedded near an annulus (bottom left). Note that we add an additional trivial loop in each connected braid component traveling around the exterior of the annulus. We then partially twist the annulus (top right) such that the cylindrical vertical portion in gray and the horizontal portion in yellow are perpendicular to each other. In the top right figure, the top edge is shown in red and the bottom edge in dark blue to clarify this twisting. It looks pinched due to the perspective, but is in fact only twisted. The bottom right figure shows the result of twisting the braid together with the annulus, with some massaging to keep the crossings in the horizontal part. Due to the addition of the exterior trivial loops from the bottom left figure, the vineyard consisting of the persistence diagrams of the distance function to a point following the twisted annulus (bottom right) on the `outside' contains the input braid with some surgery.
    See  Figure \ref{fig:computed-braids} for several 3D views of the embedded knot, as well as Figure \ref{fig:pretty_trefcircle} and the corresponding detailed example and discussion in Appendix~\ref{app:Example}.      } 
    \label{fig:mainResult1}
\end{figure}

\begin{figure}
    \centering
    \includegraphics[width=.75\textwidth]{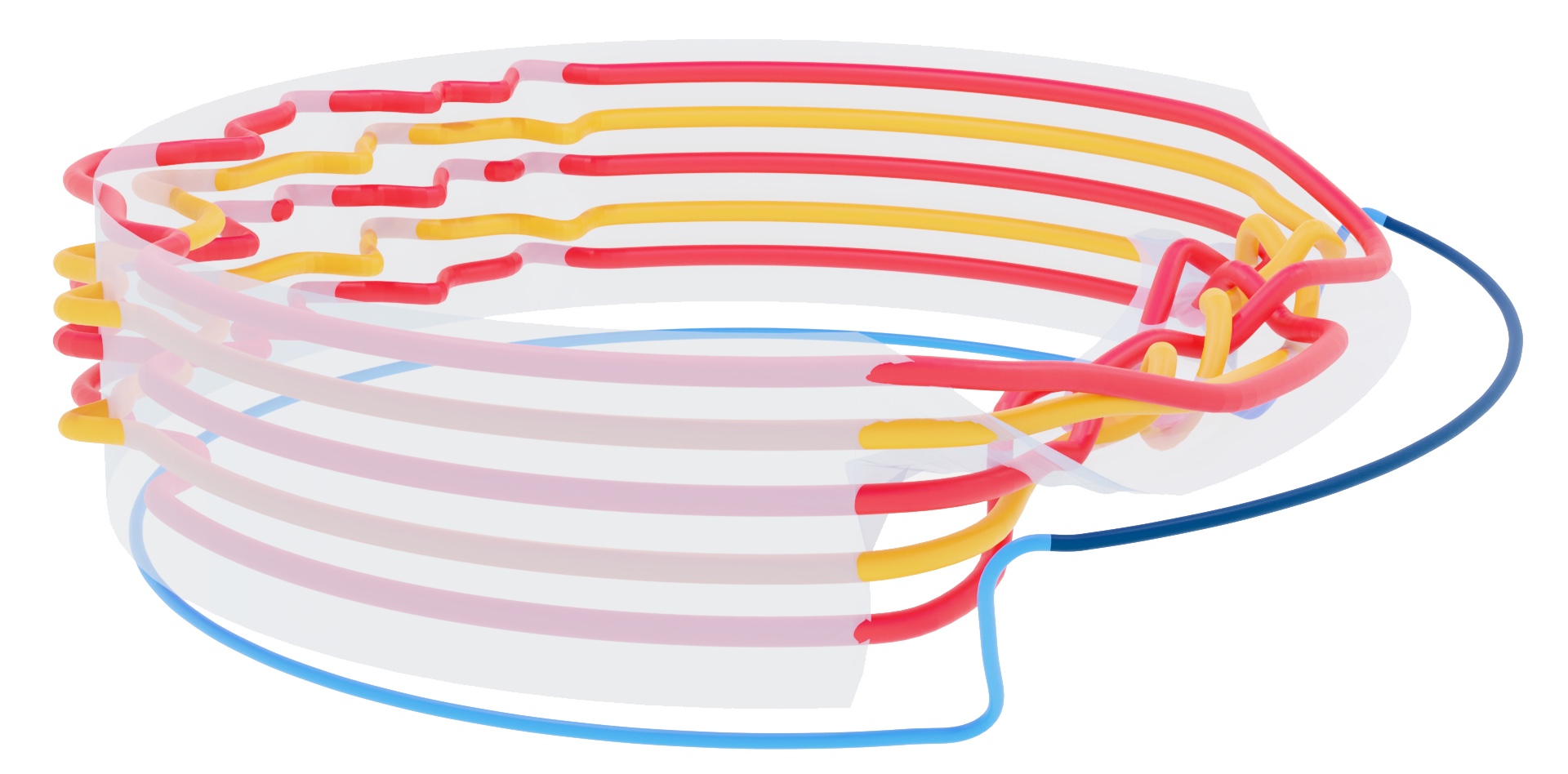}
    \includegraphics[width=.75\textwidth]{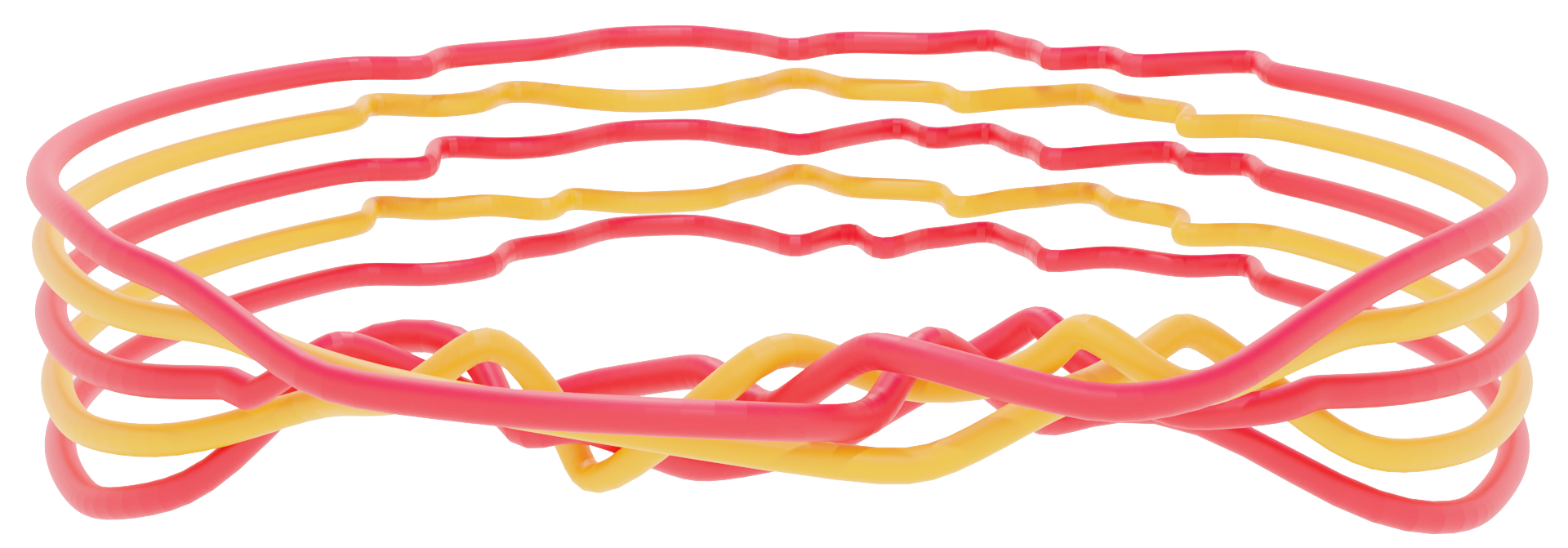}
    \includegraphics[width=.75\linewidth]{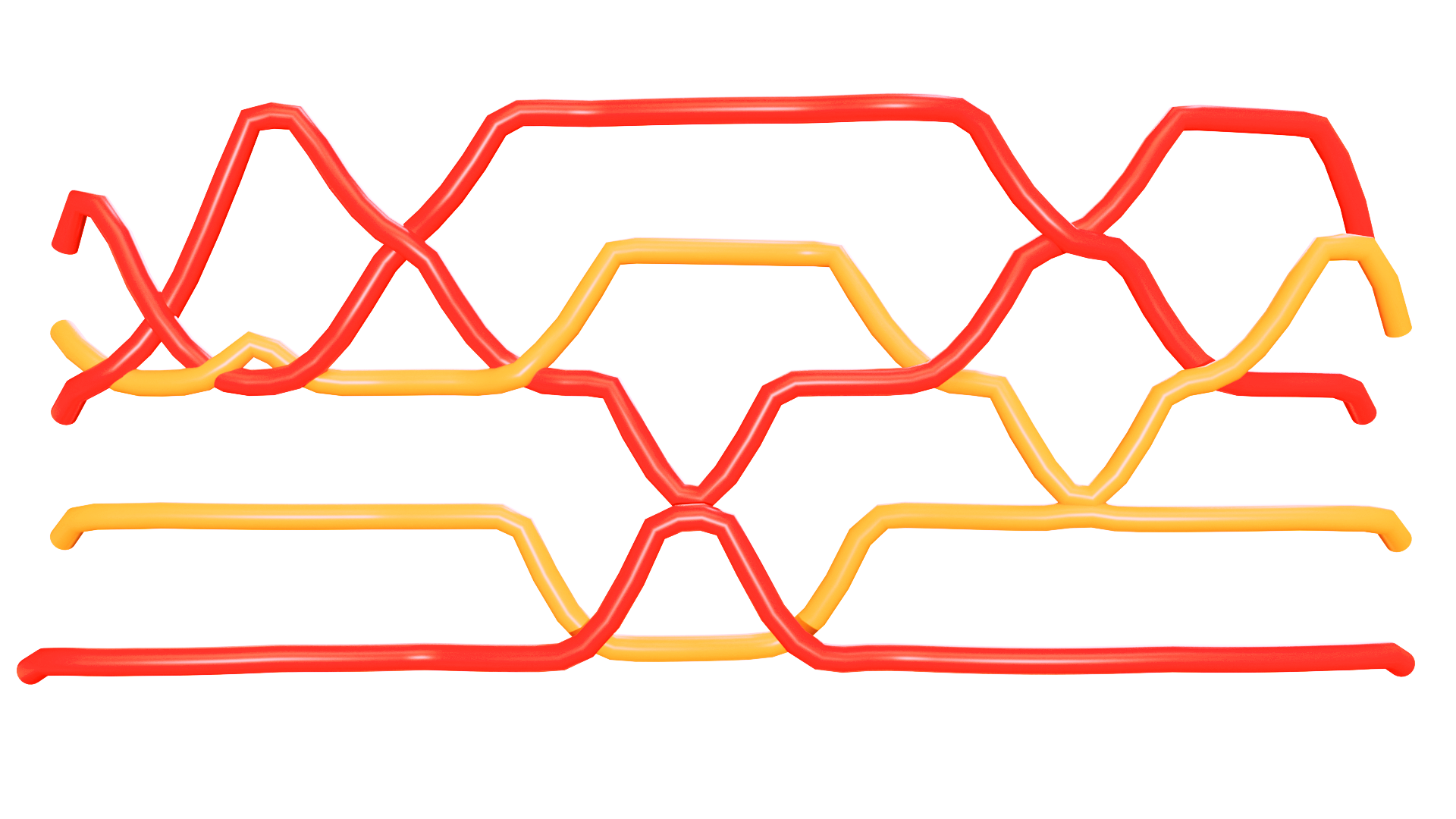}
        \caption{Top: a side view of a 3-dimensional embedding of the link depicted in Figure \ref{fig:mainResult1}, where the crossings and strands are perturbed in a particular way to get the desired monodromy (see the description in Figure \ref{fig:mainResult1}).  The red and gold portions are the two sections of the link, and the blue curve is an observation loop for the radial transform, where the dark blue portion corresponds to the section of vineyard visualized in the bottom. 
        See also the proof of Theorem~\ref{thm:Monodromy} and the corresponding Example in Appendix \ref{app:Example}, for full details on the perturbation. 
        Middle: Front-angled view of the embedded link. 
        Bottom: A sideways view of 
        our computed vineyard, showing the diagrams computed from the radial transform of observation points 
        taken from the fraction of the full period (from $0$ to $2\pi$) depicted in dark blue segment of the 
        curve above.
        This segment of the vineyard captures all crossings and exhibits monodromy of period $2\pi \cdot 3$. To transform this braided vineyard into a closed braid, we would identify the ``sides'', or slices, at $0$ and $2\pi$.}
    \label{fig:computed-braids}
\end{figure}

We remark that our construction hinges not only on $\mathcal{M}$ but also on a careful choice of $\gamma$. Indeed, fixing $\mathcal{M}$, there can be closed curves on $\mathcal{M}$ which yield no links in the vineyard.  

\section{Preliminaries}
\label{sec:prelims}

\subsection{Monodromy} 

Monodromy is an important concept in mathematics that appears in various guises. We refer to the review \cite{ebeling2005monodromy} (the first part of which is almost a review of reviews) and the other reviews mentioned in that paper for an overview of the various aspects of the theory. In this paper we will only consider the simplest incarnation, and only in the setting of topological data analysis. 

Let $\tilde{X}$ be a covering space of $X$ with covering map $C: \tilde{X} \to X$, that is for every $x \in X$ there exists an open neighbourhood $x\in U$ and a discrete set $J$, such that $C^{-1} (U) = \sqcup_{i\in J} V_i$ and $C|_{V_j} : V_j \to U$ is a homeomorphism for all $j\in J$. See Figure \ref{fig:General_covering_space} for an illustration.  We call the inverse images of points $x \in X$ of the map $C$ the fibres. For a curve  $\gamma: [0,2 \pi] \to X$ we write $\tilde{\gamma}$ for (one of) its lift(s), that is a continuous map $\tilde{\gamma} : [0,2 \pi] \to \tilde{X}$ such that $C \cdot \tilde{\gamma} = \gamma$. If $\gamma$ is a loop, that is $\gamma(0)= \gamma(2 \pi)$, then we say that $\gamma$ exhibits monodromy (at the starting point $\tilde{\gamma} (0) \in C^{-1} ( \gamma(0))$) if we have that the start and end points of its lift $\tilde{\gamma}$  are different, i.e. $\tilde{\gamma}(0) \neq \tilde{\gamma}(2 \pi)$. The difference between $\tilde{\gamma}(0)$ and $\tilde{\gamma}(2 \pi)$ is also referred to as monodromy (this difference can in certain cases be best represented by a group, see \cite{ebeling2005monodromy}, although we will not need this in our discussion).

\begin{figure}[h!]
    \centering
    \includegraphics[width=0.5\linewidth]{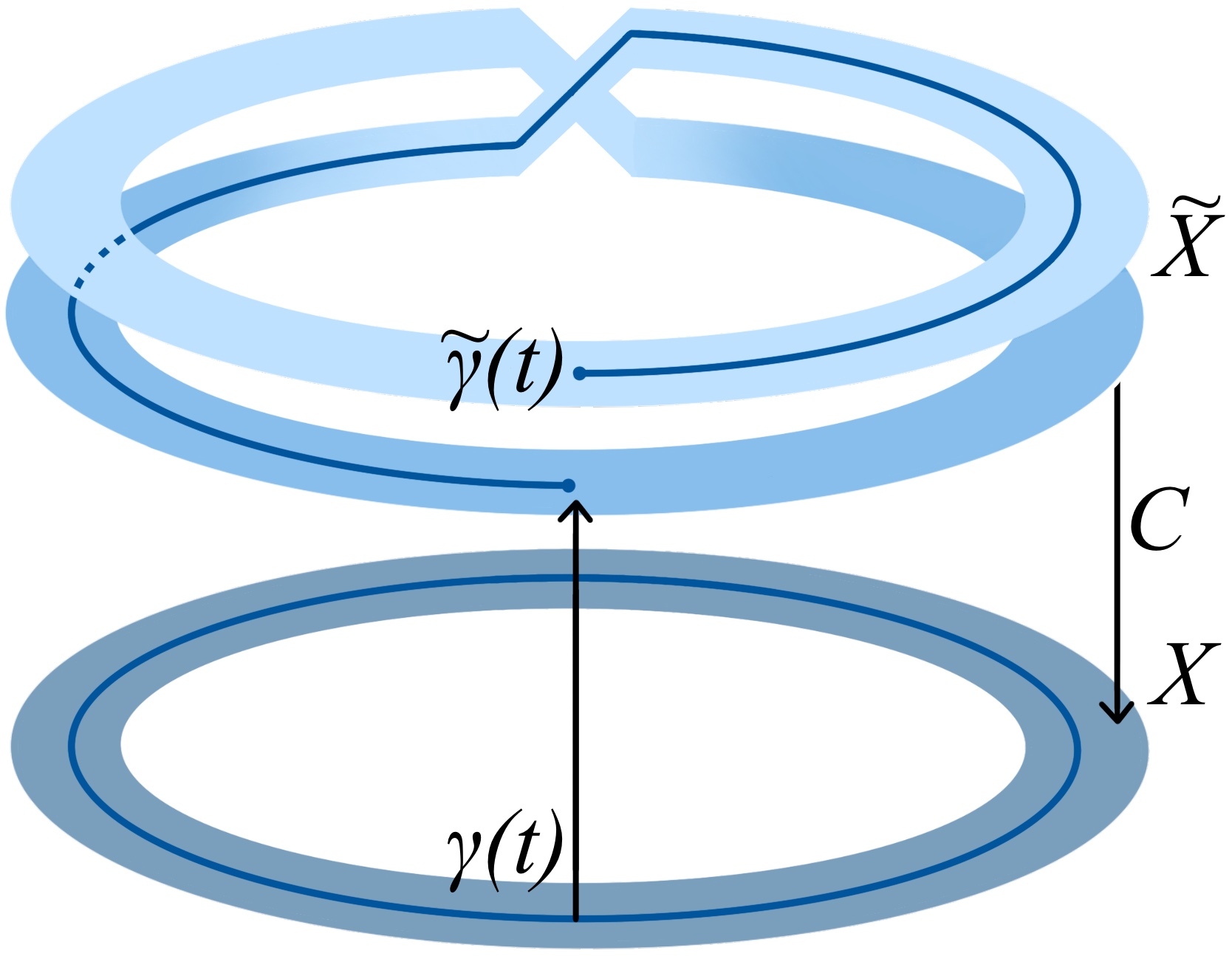}
    \caption{Here we see a cover $\tilde{X}$ (in this case a double cover) of the base space $X$, in this case a circle as well as the curve $\gamma$ and its lift $\tilde{\gamma}$.  %
    }
    \label{fig:General_covering_space}
\end{figure}

If $\gamma$ is a loop we can extend it, formally speaking by concatenating with itself. Here we adopt the convention that if $\gamma$ and $\tilde{\gamma}$ are two curves parametrized by $[0,2\pi]$, then the concatenation $\gamma \circ \tilde{\gamma}$ is parametrized by $[0,4\pi]$, that is, we do not rescale the parametrization interval. We write 
\[ \gamma^k = \underbrace{ \gamma \circ  \dots \circ \gamma}_{k}
\] 
and $\widetilde{\gamma^k}$ %
for its lifting. We stress that generally 
\[ 
\widetilde{\gamma^k} \neq \underbrace{ \tilde{ \gamma} \circ  \dots \circ \tilde{\gamma}}_{k},
\] 
where $\tilde{ \gamma}$ is the lifting of $\gamma$. In fact the right hand side does not even have to be a continuous curve. 
We say that a loop $\gamma$ (parametrized by $[0,2 \pi]$) in the base space $X$ exhibits \emph{monodromy of order} $k$ if $k$ is the smallest positive integer such that the lifted curve $\widetilde{\gamma^k}$ satisfies
\[ 
\widetilde{\gamma^k} (0)= \widetilde{\gamma^k}(2 \pi k). 
\]
If $k=1$ we say that $\gamma$ exhibits no or trivial monodromy.

\subsection{Knots, links, and braids} 
\label{ssec:knotsandbraids}

In this subsection we briefly recall the formal definitions of knots, links and braids. 
An (oriented) \emph{knot} is the the equivalence class of oriented closed curves embedded in $3$-dimensional Euclidean space, $\gamma \colon \Sp^1 \to \R^3$, under ambient isotopy.
    To simplify the discussion, we will abuse notation and write $\gamma$ for the map as well as its image in $\R^3$.
    A \emph{link} with $n$ \emph{components} is a disjoint union of $n$ knots, $L=\gamma_1, \cup \ldots \cup \gamma_n \subset \R^3$.
We say two knots (or links) are \emph{equivalent} if there is an orientation-preserving homeomorphism from $\R^3$ to $\R^3$ such that one knot (or link) is the image of the other.
Generally, this equivalence is formalized as a series of Reidemeister moves (see Figure~\ref{fig:Reidemeister}), as any isotopy between knots can be related by a sequence of these three local moves~\cite{Alexander1926,Reidemeister1927}. 

\begin{figure}
\begin{center} 
\includegraphics[width=0.7\linewidth]{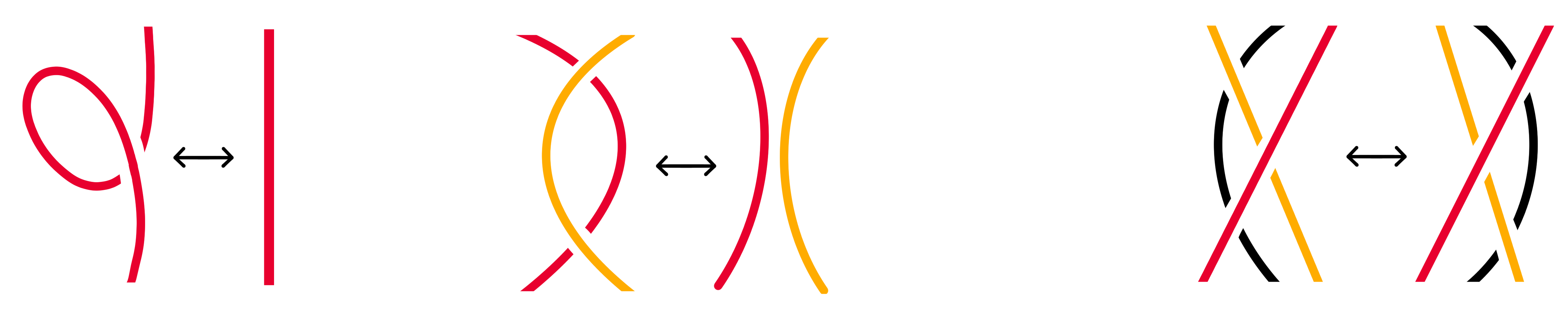}
\end{center} 
\caption{
The Reidemeister moves. Left to right: Type I, Type II, Type III.
\label{fig:Reidemeister}
}
\end{figure}

For each $u \in \Sp^2$, the projection of a knot (or link), $\gamma$, in the direction $u$ provides a \emph{knot diagram} (or link diagram) of $\gamma$ with crossings.  In this paper we will assume diagrams with generic crossings, so that our diagram has a finite number of crossings and no self-tangencies.

A \emph{braid} on $m$ strands is the disjoint union of $m$ intervals embedded in a solid\footnote{That is the product of the disk $D^2$ and the interval $I$.} cylinder, that is the braid $B$ is the union of strands $B_i$, where $B_i\colon I \to D^2\times I$, monotonically increasing with respect to $I$, such that $B_i(0) = (d_i, 0)$ and $B_i(1) = (d_j, 1)$ for each  $i$, where $i$ indexes the strand, so that the set of endpoints of strands is some permutation of the set of origins.
Two braids are \emph{equivalent} if there is an orientation-preserving homeomorphism from $D^2\times I$ to $D^2\times I$ such that one braid is the image of the other and at each time during the homeomorphism the image is also a braid.
For each $u \in \Sp^1$, the projection of a braid in the direction $u$ provides a \emph{braid diagram} of $B$ with crossings.  In this paper we will assume braid diagrams with generic crossings, and we further use that each braid can be represented in a piecewise vertical diagram, that is the strands in the diagram are vertical except in a (small) neighbourhood of a crossing. The fact that braids can always be represented in such a way seems to be folklore; see for example ``Reidemeister’s Theorem" and the discussion in~\cite{Birman2005}.

\begin{definition}[Closed braid] \label{def:ClosedBraid}
    A \emph{closed braid} or \emph{braided link} is the image of a braid under the map from the solid cylinder to the solid torus, $D^2\times I \to D^2 \times \Sp^1 {\subset \mathbb{R}^2 \times \mathbb{C}  \simeq \mathbb{R}^4}$, which sends $(x,y)\mapsto(x,e^{ iy})$, {where $e^{iy}$ gives the standard embedding from $\mathbb{R}/ 2 \pi \mathbb{Z}$ into $\mathbb{C} \simeq \mathbb{R}^2$}. 
    The \emph{braid index} is the minimum number of strands required to form a closed braid equivalent to a given link. 
\end{definition}

Under the standard embedding $\mathcal{T}$ of the solid torus in $3$-dimensional Euclidean space, where the embedding is rotationally symmetric around the $z$-axis and where  each strand is oriented positively, the closed braid can be considered a link with $n \leq m$ components.  Importantly, the orientation of each resulting component is aligned to a positive orientation on the core circle of the solid torus at all points. 

We note that previous work has connected braids and braid groups with monodromy~\cite{CogolludoAgustn2011, Cohen1997, salter2023stratifiedbraidgroupsmonodromy, Salter2024}, further motivating the connection which we explore in this paper.

The following result of Alexander will be essential to our result.
\begin{theorem}[Alexander 1923 \cite{Alexander1923}] \label{thm:Alexander}
    Every knot or link is equivalent to the oriented image of a closed braid.
\end{theorem}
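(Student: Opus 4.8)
The plan is to give Alexander's classical braiding argument, which converts an arbitrary oriented link diagram into a closed braid by a finite sequence of isotopies. First I would fix a generic planar projection of the oriented link $\mathcal{X}$, producing a diagram $D \subset \mathbb{R}^2$ with finitely many transverse double points and consistent over/under data; this is possible because the set of directions $u \in \Sp^2$ yielding non-generic diagrams has measure zero. I would then choose a base point (the projected braid axis) $p \in \mathbb{R}^2$ lying on no edge of $D$, and introduce polar coordinates $(r,\theta)$ centered at $p$. The diagram is already a closed braid around $p$ precisely when, traversing each oriented component, the angular coordinate $\theta$ is strictly monotone, say increasing, throughout: cutting along a single ray emanating from $p$ then exhibits the strands of an honest braid whose closure is $D$, and monotone increase of $\theta$ is exactly the requirement that every component be oriented positively about the core circle, as demanded by the statement.

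Next I would subdivide each component of $D$ into finitely many arcs that are monotone in $\theta$, labelling each such arc \emph{good} if $\theta$ increases along it and \emph{bad} if $\theta$ decreases; the goal is to eliminate all bad arcs. The key tool is Alexander's move: given a bad arc $a$, one replaces it by an arc $a'$ with the same endpoints that sweeps around $p$ the \emph{other} way, i.e.\ monotone increasing in $\theta$ through the complementary angular interval. Viewing the diagram on the one-point compactification $\Sphere^2 = \mathbb{R}^2 \cup \{\infty\}$, the arc $a$ can be pushed across the point $\infty$ onto $a'$; this is an ambient isotopy of $\Sphere^2$, hence of the link in $\R^3 \cup \{\infty\} = \Sp^3$, so the link type is unchanged. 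Routing $a'$ uniformly over (or uniformly under) every strand it now meets is consistent precisely because near $\infty$ there is nothing to obstruct the slide, so the new over/under data is well defined.

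I would then argue termination: each application of Alexander's move replaces one bad arc by arcs that are all good, strictly decreasing the finite number of bad arcs, so after finitely many moves every arc is good. At that point $\theta$ is monotone increasing along each oriented component, so $D$ is a closed braid around $p$ carrying the positive orientation about the axis. Finally, cutting along a ray from $p$ and reading off the strands yields a braid $B$ in a solid cylinder with $B(0)$ a permutation of $B(1)$, whose closure under the map of Definition~\ref{def:ClosedBraid} is ambiently isotopic to $\mathcal{X}$.

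I expect the main obstacle to be the honest verification that each Alexander move is a genuine ambient isotopy with correctly bookkept crossing information, rather than merely a diagrammatic manipulation; the clean way around this is the $\Sphere^2$-compactification viewpoint, which turns ``sliding a bad arc past the point at infinity'' into an evident isotopy and removes any ambiguity in how $a'$ passes over or under the remaining strands. A secondary point requiring care is the initial genericity of the projection and the placement of $p$, so that the good/bad dichotomy and the cutting ray are well defined and no move introduces non-transverse crossings.
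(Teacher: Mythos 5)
The paper does not prove this statement at all: it is quoted as a classical result of Alexander, with a pointer to Vogel's later algorithmic proof \cite{Vogel1990} (which is what the paper actually leans on for the complexity bounds involving Seifert circles). So there is no in-paper argument to compare against; what you have written is the standard Alexander braiding argument, and in outline it is the right proof. If the paper had wanted a constructive version compatible with its complexity discussion, the Vogel/Yamada route via Seifert circles would have been the more natural choice, but your route is the historically correct one for the bare existence statement.

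There is, however, one genuine gap in your execution of the Alexander move. You justify routing the replacement arc $a'$ ``uniformly over (or uniformly under) every strand it now meets'' by saying nothing obstructs the slide near $\infty$. The obstruction is not near $\infty$; it is at the crossings in which the original bad arc $a$ itself participates. If $a$ passes \emph{under} some strand at one of its crossings, you cannot lift $a$ up and over the rest of the diagram without passing through that strand; you must instead push it underneath everything. If $a$ contains both an over-crossing and an under-crossing, neither choice is realizable by an ambient isotopy, and the move as you state it changes the link type. The standard repair is to first subdivide each $\theta$-monotone bad arc further so that every bad arc contains \emph{at most one} crossing, and then throw it entirely over or entirely under according to whether it is the over- or under-strand at that crossing. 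With that refinement your termination argument still works (the subdivision is done once, each move eliminates exactly one bad arc and produces only good arcs, and the new crossings created on other strands do not alter their $\theta$-monotonicity), and the rest of your write-up --- genericity of the projection, placement of the axis $p$, monotonicity of $\theta$ as the braid criterion, and cutting along a ray to read off the braid whose closure under Definition~\ref{def:ClosedBraid} recovers the link --- is correct.
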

Of course this correspondence is not bijective as each link may be equivalent to many closed braids.  An algorithmic alternative proof of this result was later given in \cite{Vogel1990}.  
The complexity of the algorithm depends on the number of Seifert circles, which are defined as:
\begin{definition}[Seifert Circle]
Given an oriented link diagram, by eliminating each crossing and connecting each incoming strand with its adjacent outgoing strand we obtain a diagram of oriented circles known as \emph{Seifert circles}. %
\end{definition}
The number of elementary operations of the algorithm in \cite{Vogel1990} to obtain a braid diagram from a given link diagram with $n$ crossings and $p$ Seifert circles is at most $(p-1)(p-2)/2$ and the number of crossings in the resulting braid is at most $n + (p-1)(p-2)$. 
The braid index of a link is the smallest number of strands needed for a closed braid representation of the link. The braid index is equal to the minimal number of Seifert circles in any diagram of the braid \cite{yamada1987index}. %

\subsection{Persistence and Vineyards}

We  assume that the reader is familiar with the basic topics in algebraic topology \cite{munkres2018elements, hatcher2002algebraic}, Morse theory and handle decompositions \cite{milnor1963morse}, and the theory of  persistent homology \cite{Dey2022, Oudot2015}, given the tools needed in our construction. We nevertheless give a very brief schematic overview of persistence because we need to contrast this with extended persistence, which is in the opinion of the authors not as utilized as it should be, and therefore may deserve to be recalled. However, we do not aim to be complete in our schematic overview, and refer the curious to \cite{Dey2022, Oudot2015} for further details. 

We fix the following blanket assumption throughout the entirety of the paper:
\textbf{We assume that all our persistence diagrams only contain a finite number of points (counted with multiplicity).}  We note that this is always true for $C^2$ manifolds with a tame Morse function, which is the setting we require.\footnote{Note that we are not referring to tame knots but rather tame Morse functions in this work, as an embedded tame knot may have an infinite number of critical points in its distance or height filtration.}

\subsubsection{Persistence}

Topological data analysis (TDA) starts from the premise that while classical homology is useful for studying topological features of data, it lacks the discernment to pick the correct scale to extract features from. Persistent homology remedies this shortcoming by studying nested sequences of sublevel sets called \emph{filtrations}, thereby extracting features across many scales. 
Taking a manifold $\M$ and a nice function to filter it with, for example, the height function, yields the \emph{persistence module} comprised of homology groups and linear maps induced by the inclusions between sublevel sets
\begin{align}
  \ldots \to H(\M_{a_{i-1}}) \to H(\M_{a_{i}}) \to \ldots \to H(\M_{a_{j-1}}) \to H(\M_{a_j}) \to \ldots .
\end{align}
Composing the maps between consecutive groups, we get a map between any two groups in the module.
We say a homology class $\alpha \in H(\M_{a_i})$ is \emph{born} at $\M_{a_i}$ if it is not in the image of the map from $H(\M_{a_{i-1}})$ to $H(\M_{a_i})$.
If $\alpha$ is born at $\M_{a_i}$, it \emph{dies} entering $\M_{a_j}$ if the image of the map from $H(\M_{a_{i-1}})$ to $H(\M_{a_{j}})$ does not contain the image of $\alpha$, but the image of the map from $H(\M_{a_{i-1}})$ to $H(M_{a_{j-1}})$ does. 
The \emph{persistence} of $\alpha$ is the difference between the function values at its birth and its death.
If the function is a Morse function on a manifold, then precisely one Betti number $\beta_{{\tilde{\ell}}}$ changes when the threshold passes a critical value.
If the order of the corresponding critical point is ${\ell}$, then either a ${\ell}$-dimensional class is born, so $\beta_{\ell}$ increases by one, or a $({\ell}-1)$-dimensional class dies, so $\beta_{{\ell}-1}$ decreases by one.
We use a \emph{persistence diagram} $\Dgm{}{f}$ to encode the  birth-death information of all of the $p$-dimensional homology classes of $M$ arising from the sublevel set filtration induced by the filtering function $f$, where each birth/death pair becomes a point in $\mathbb{R}^2$.  See Figure~\ref{fig:NonExtPErsTorus}.

\begin{figure}
\begin{center} 
\includegraphics[width=0.9\linewidth]{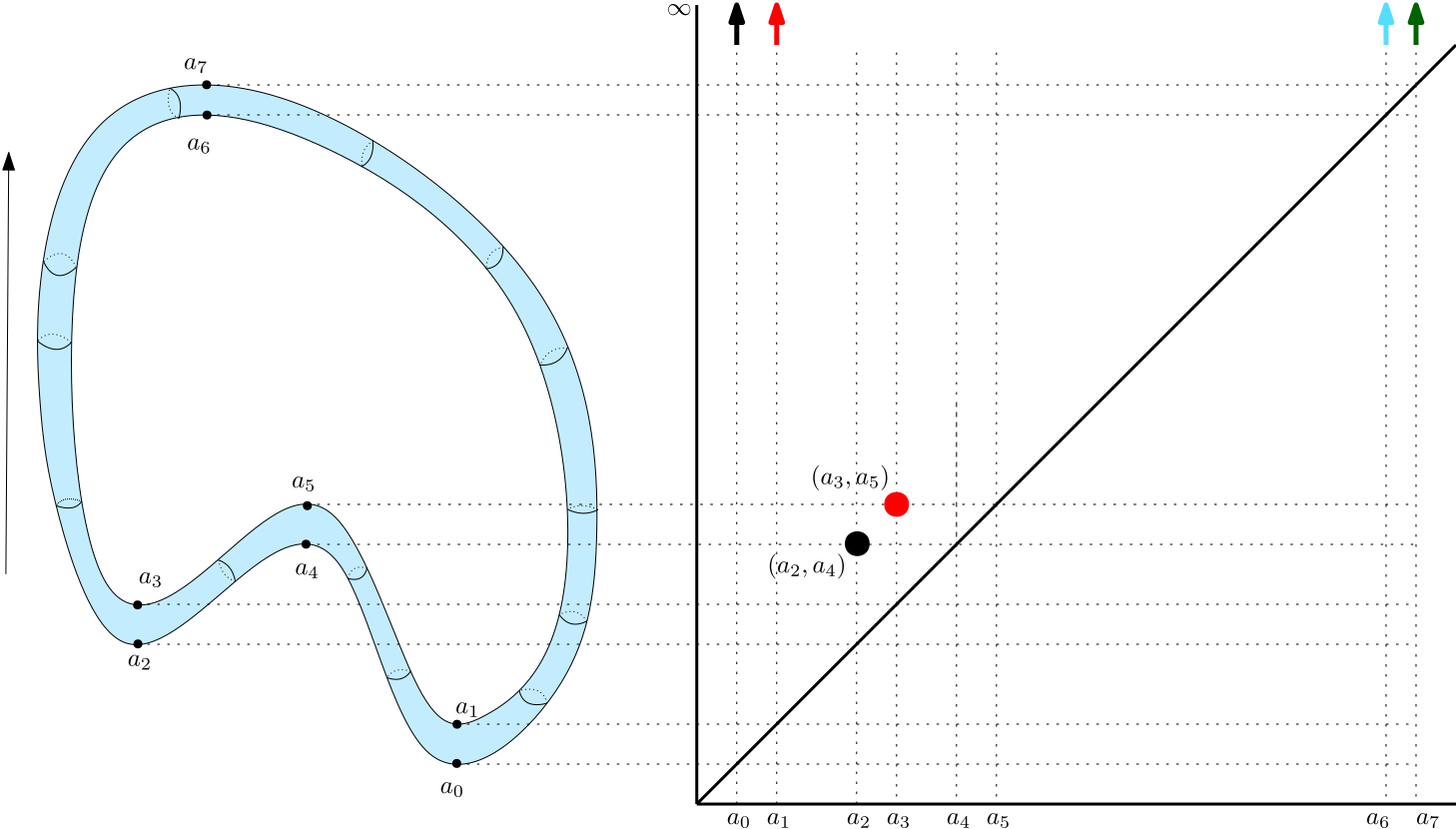}
\end{center} 
\caption{
We consider a torus ($\mathbb{S}^l \times \mathbb{S}^1$) embedded in $\mathbb{R}^{l+2}$  with a height function (whose direction is indicated by the black arrow). 
In the figure, the color indicates the degree in homology of the point: black for $H_0$; light blue for $H_1$; red for $H_l$; and dark green for $H_{l+1}$. Arrow markers distinguish cycles that live forever from those with finite death coordinates. %
\label{fig:NonExtPErsTorus}
}
\end{figure}

\subsubsection{Extended persistence}

There are some drawbacks to the topological summary given by standard persistence, most notably from our perspective the points at infinity, which could cause difficulties in the next section when we formalize the notion of a braided vineyard.  For example, consider a sublevel set filtration of a shape embedded in $\mathbb{R}^2$ or $\mathbb{R}^3$ with nontrivial $H_0$ and $H_1$: at some point, connected components and loops are born, but never die, as they are present in all sublevel sets after their initial appearance.  Even worse, if the input is non-generic and two $H_1$ cycles are born at the same height in the sublevel set filtration, then they give rise to identical persistence pairs.

To address this, Agarwal et.~al.~\cite{Agarwal2006} established a pairing between all critical points of a height function on a 2-manifold, which Cohen-Steiner et.~al.~\cite{CohenSteiner2008} extended to general manifolds with tame functions, leveraging Poincar\'{e} and Lefschetz duality to create a new sequence of homology groups where we begin and end with the trivial group. This guarantees that each homology class that is born will also die at a finite value, replacing all the problematic points paired with $\infty$, and guaranteeing a perfect matching on critical points. 
Let $H_{l}(\M,\M^a)$ denote the relative homology group of $\M$ with the superlevel set $\M^a = f^{-1}[a,\infty)$. 
Again assume we have a tame function %
and critical set $a_1,\cdots,a_k$, and note that $H_{l}(\M_{a_k}) = H_{l}(\M) = H_{l}(\M,\M^a)$ for any $a >a_k$.
From this, we can create a new sequence of homology groups
\begin{align}
    0 & \rightarrow H_{ l}(\M_{a_1}) \rightarrow H_{l}(\M_{a_2}) \rightarrow \ldots \rightarrow H_{ l}(\M_{a_{k-1}}) 
    \nonumber
    \\
    &  \rightarrow H_{l}(\M_{a_k})  = H_{l}(\M,\emptyset) \rightarrow \ H_{l}(\M,\M^{a_k}) 
    \nonumber
    \\
    & \rightarrow H_{ l}(\M,\M^{a_{k-1}}) \rightarrow \ldots \rightarrow H_{l}(\M,\M^{a_1}) = H_{l}(\M,\M) = 0,
    \label{Eq:ExactSequence}
\end{align}
which we call the \emph{$l^{th}$ extended filtration sequence}. We define the sequence $H_{l}(\M_{a_1}) \rightarrow \ldots \rightarrow H_{l}(\M_{a_k})$ as the \emph{upwards} sequence and the sequence 
$H_{l}(\M,\M^{a_k}) \rightarrow \ldots \rightarrow H_{l}(\M,\M^{a_1})$ as the \emph{downwards} sequence.

Note that this sequence fits the structure of a persistence module, so like standard persistence, it has a unique interval decomposition~\cite{CrawleyBoevey2015}. 
The only change is that we interpret the persistence points differently in this setting. 
Specifically, the points in the persistence diagram can be partitioned into three different groups: 1) the classes which are born and die in the upwards sequence, which we denote as ordinary persistence points, 2) the classes which are born and die in the downwards sequence, which we call relative persistence points,  and 3) the classes which are born in the upwards sequence and die in the downwards sequence, which we call the extended persistence points.
Further, we associate the order of birth and death intervals with the value $a_i$ for both $H_{ l}(\M_{a_i})$ in the the upwards sequence and $H_{l}(\M,\M^{a_i})$ in the downwards sequence.
We stress that we use the value $a_i$ as the (birth or death) coordinate in the persistence diagram for both $H_l(\M_{a_i})$ in the the upwards sequence and $H_l(\M,\M^{a_i})$ in the downwards sequence.
Note that this means points in the ordinary diagram are entirely above the diagonal; the relative points lie entirely below the diagonal; while the extended persistence points can be on either side.
See Figure~\ref{fig:torus}.

\begin{figure}
\begin{center} 
\includegraphics[width=0.9\linewidth]{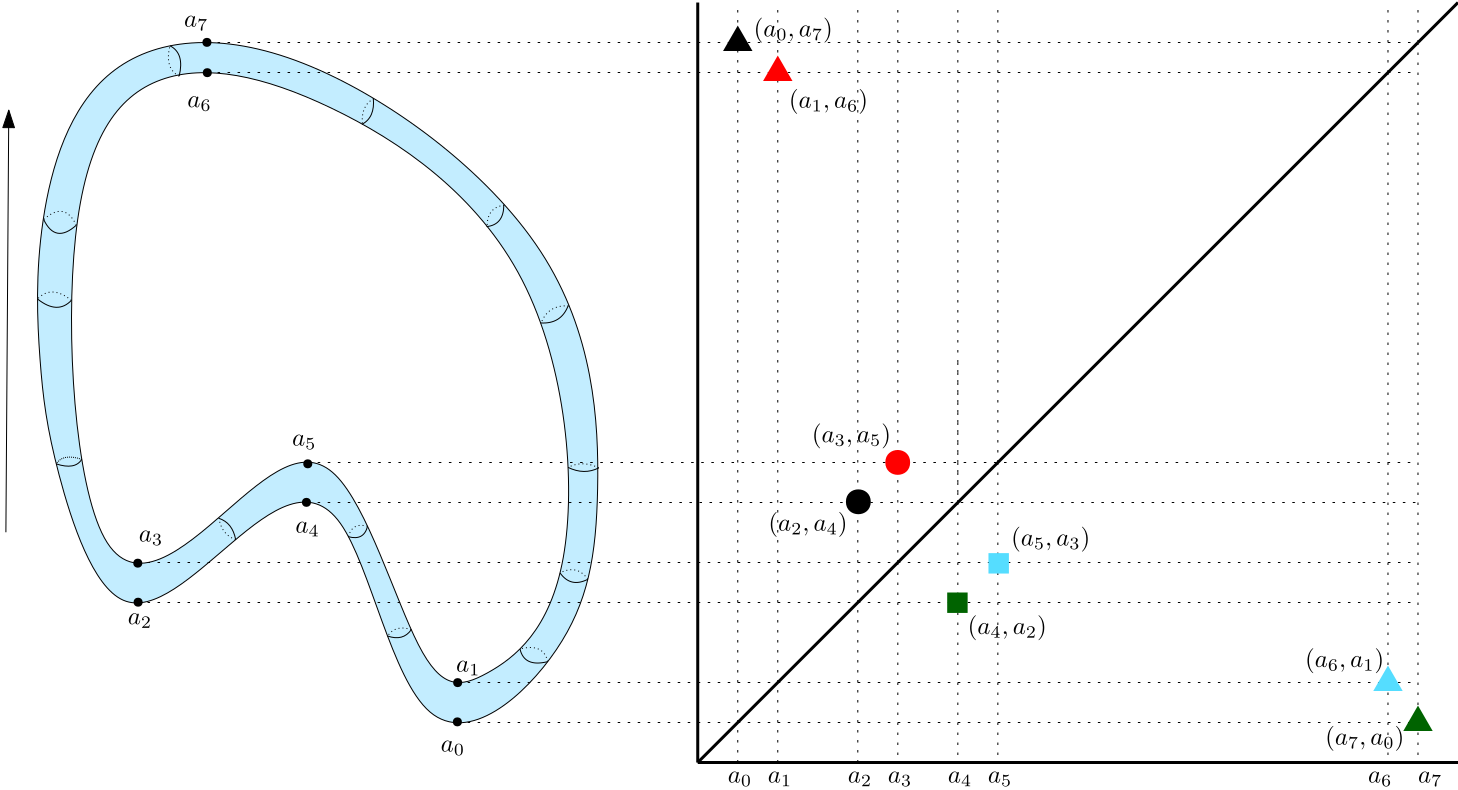}
\end{center} 
\caption{
A torus ($\mathbb{S}^l \times \mathbb{S}^1$) embedded in $\mathbb{R}^{l+2}$  (left) with a height function (whose direction is indicated by the black arrow). 
In %
the figure, the colour indicates the degree in homology of the point: black for $H_0$; light blue for $H_1$; red for $H_l$; and dark green for $H_{l+1}$; while the shape of the marker indicates the type of the point, dots for ordinary points, that is they are both born and die on the way up; squares for relative points, that is they are both born and die on the way down; and triangles for extended points, that is they are born upwards but die downwards.
\label{fig:torus}
}
\end{figure}

\subsubsection{Vineyards}
One can also study the persistence on a manifold $\M$ arising from multiple functions. As long as the functions are similar enough, the Stability Theorem  \cite{cohen2005stability} 
of persistent homology asserts that their associated persistence diagrams will also be similar; see also~\cite{turner2023representing} for a discussion of the algebraic details of  maps between ``nearby" diagrams. More precisely, the \emph{bottleneck distance} {($\Bottleneck$)}  between two persistence diagrams of functions $g_u, g_v \colon \M \to \Rspace$ is bounded from above by the infinity norm of their difference:
\begin{align}
  \Bottleneck (\Dgm{}{g_u}, \Dgm{}{g_v})  &\leq  \Edist{g_u}{g_v}_\infty ,
\end{align}
in which $\Dgm{}{g}$ is the persistence diagram of $g$, a matching between the two diagrams is quantified by the supremum of the max-distances between matched points, and $\Bottleneck$ is the infimum {distance} over all possible matchings, where we allow the introduction of points on the diagonal (where birth time is equal to death time), in which we prefereably match points near to the diagonal to points on the diagonal. Intuitively, the bottleneck distance describes the worst disparity between the best matching of points in persistence diagrams: the worst disparity is the ``bottleneck'' preventing a smaller distance.

This gives rise to the concept of a \emph{vineyard} \cite{CohenSteiner2006,turner2023representing}. It formalizes the idea that a feature of $g_u$ is still recognizable in $g_v$, provided the two functions $u$ and $v$ are not too far apart.
Features are points in the diagram, and the association is a matching between the points of $g_u$ and of $g_v$.  
We now give a brief overview of the definition, but refer the reader to \cite{turner2023representing} for a more detailed description. 
Given a one-parameter family of continuous (Morse) functions (or more generally filtrations) parametrized by $t\in I$ or $t \in \mathbb{S}^1$, with $I$ an interval in $\mathbb{R}$, the vineyard is a section of trivial fibre bundle (consisting of the product of the parameter domain ($I$ or $\mathbb{S}^1$) and the space of persistence diagrams), which associates to each parameter value the persistence diagram of the (sub/super) level set filtration induced by the Morse function. 
In our context (see below) there will be no points in persistence diagrams that have multiplicity higher than one, in which case the structure is simple. Each point on a persistence diagram extends to a Lipschitz curve in the vineyard thanks to the aforementioned stability of persistence diagrams.  In other words each connected component of the section of the bundle is a Lipschitz curve. We call each such connected component a \emph{vine}. We'll come back to this definition in Section \ref{sec:MonodromyInVineyards}, when defining monodromy in our context and introduce additional notation. 
We briefly note that vineyards have been used in several settings for analysis of dynamic data sets~\cite{Yoo2016,Carriere2020,Xian2022,Cipriani2023}.  
Cohen-Steiner et al. \cite{CohenSteiner2006} proposed the so-called \emph{vineyard algorithm} that traces the features (points) while continuously deforming $g_u$ into $g_v$, realized as an update to the reduced matrix arising from $g_u$.
This update can be done in $O(n)$ time, as compared to the alternative (in practice) $O(n^3)$ time\footnote{Formally speaking the complexity is that of matrix multiplication, i.e. $O(n^\omega)$, for which the best currently known value is $\omega\simeq 2.371552$, see \cite{williams2024new}. } of computing and reducing the new matrix from scratch.  The algorithm is available as a part of the package Dionysus~\cite{dionysus}.

\section{Monodromy in vineyards: a geometric viewpoint}
\label{sec:MonodromyInVineyards}

In the context of persistence of some topological space with the induced distance function from a point, we need to introduce a number of conventions for monodromy to be well defined. In fact we will introduce monodromy both associated to an entire vineyard as well as to a single vine in the vineyard. 

\begin{definition}[Radial transform]
We follow the convention from~\cite{onus2024shovingtubesshapesgives} and define the radial transform as the map that associates to any point $p \in \mathbb{R}^d$ the diagram $\Per_l (d_\mathbb{E} (\cdot, p )|_\M) ) \in \dgm $.
\end{definition}

We now consider a  restriction  of the radial transform.
Let $\M$ be a manifold embedded in $\mathbb{R}^d$ and $\gamma: [0,2\pi] \to \mathbb{R}^d$ be a parametrization of a loop $\gamma$. Let 
$d_{\mathbb{E}}(\cdot, \gamma(t) ) |_\M : \M \to \Rspace$ be the function $x \mapsto d_\mathbb{E} (x, \gamma(t))$,
where $d_\mathbb{E} (x, \gamma(t))$ is the Euclidean distance from $x$ to $\gamma(t)$ and, as indicated, $|_\M$ is used to emphasize that we exclusively consider $x \in \M$. %
We will now identify the ends of the interval $[0,2\pi]$ -- that is, we pass to $\Sp^1$. For each $t$ the function $d_\mathbb{E}(\cdot, \gamma(t) )|_\M$  induces a filtration on $\M$, by the sub-level sets of the function. Therefore we can consider (for each $t$) the $l$th order persistence diagram of this filtration $\Per_l (d_\mathbb{E}(\cdot, \gamma(t) )|_\M)$. 
The map 
\begin{align} 
\PF_{\M} : \Sp^1 &\to \Sp^1 \times \dgm \nonumber\\
t &\mapsto (t, \Per_l (d_\mathbb{E} (\cdot, \gamma(t) )|_\M) ), 
\nonumber
\end{align}
where  $\dgm$ is the space of persistence diagrams, is (trivially) a covering space of $\Sp^1$. We refer to $\PF_{\M}$ as the closed vineyard map. 
Thanks to \cite{cohen2005stability} the points in the persistence diagram are (Lipschitz) continuous with respect to $t$. The map  $\PF_{\M}$ is illustrated in Figure \ref{fig:MonodromyPersistence1}.

\begin{figure}[h!]
    \centering
    \includegraphics[width=0.75\linewidth]{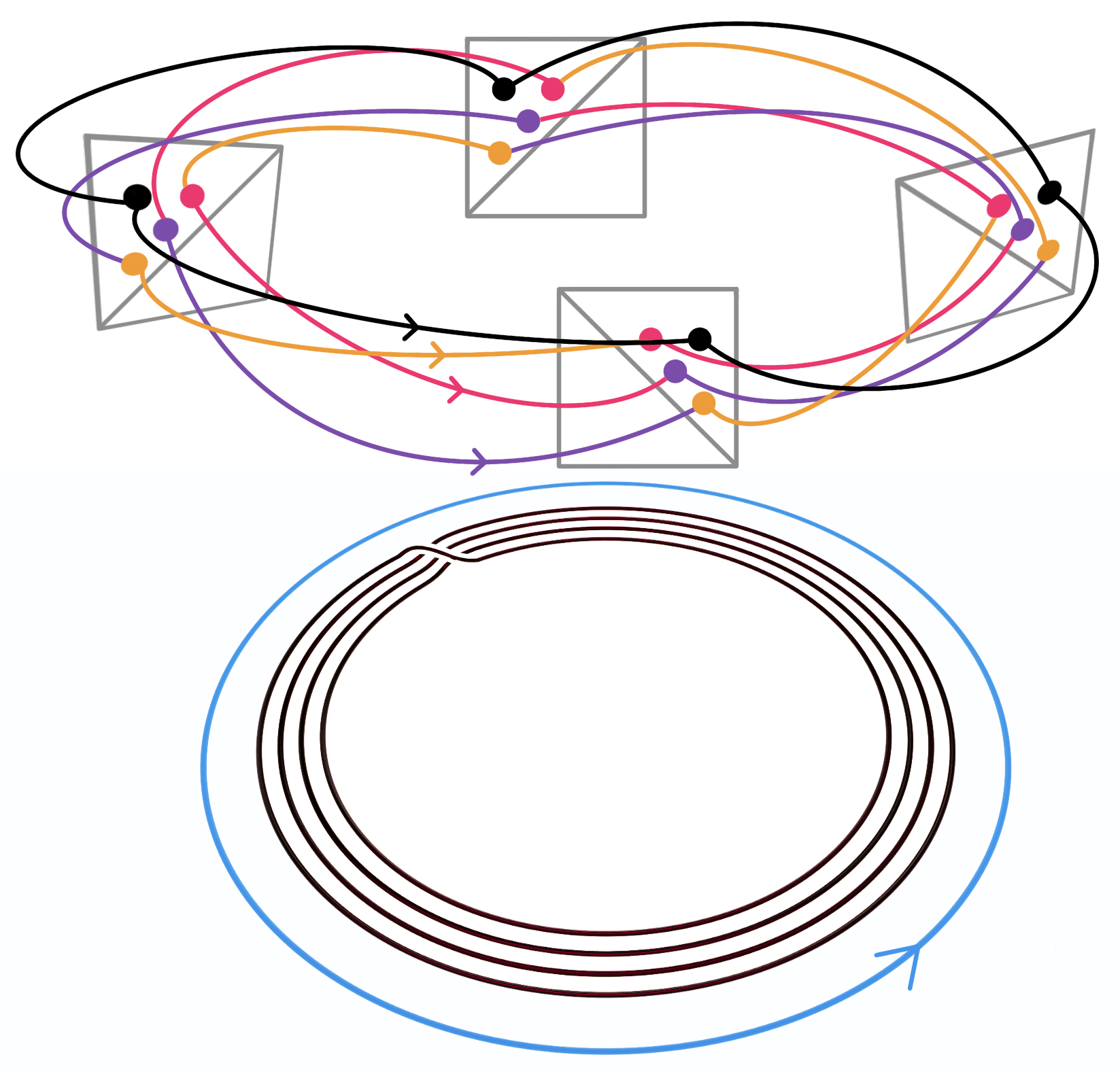}
    \caption{An illustration of the map $\PF_{\M}$. 
    We indicate the persistence diagrams, which form the fibers, only explicitly in a number of places for reasons of visibility. 
    The knot diagram below (black), define a ouroborus knot. Here, we define the ouroborus as an unknot whose knot diagram is a (finite or segment of a) spiral with the end points connected with an monotone curve (where monotonicity refers to the relation between the angular and radial coordinates)  such that the knot diagram only has under or over crossings.  The observation loop is indicated in blue. 
    }
    \label{fig:MonodromyPersistence1}
\end{figure}

We note that generically \cite[page 3]{turner2023representing}, there are no points of higher multiplicity in a persistence diagram and hence the connected components in the image of $\PF_{\M}$, that is the vines, are non-intersecting curves.  We fix the assumption throughout this paper that our vineyards are generic in this sense.

This leads us to the definition of monodromy in the simplest setting (where vines stay clear of the diagonal), which nonetheless is all that is needed for our constructions.  Intuitively, we want to codify monodromy in terms of an integer, which encodes how many copies of the vineyard must be glued together in order for points to return to their start.
More formally:

\begin{definition}\label{Def:Monodromy1}
Assume that for all points $\gamma(t)$, there are no points of multiplicity higher than $1$ in $\Per_l (d_\mathbb{E}(\cdot, \gamma(t) )|_\M)$, which means that all the vines are non-intersecting curves. Assume moreover that all of the vines are disjoint from the diagonal.  We define the following:
\newline 
\textbf{Individual vine monodromy} Given a point in the persistence diagram $\Per_l (d_\mathbb{E}(\cdot, \gamma(\tau) )|_\M)$ for some $\tau \in \mathbb{S}^1$,  write $\mathcal{V} (t)$ for the lift of $\gamma(t)$, which continuously assigns a point in $\Per_l (d_\mathbb{E}(\cdot, \gamma(t) )|_\M)$ for every $t$ and yields the given point for $t =\tau$. We will now assume without loss of generality that $\tau =0$. We say that the vine $\mathcal{V} (t)$ exhibits monodromy if the start and end points of the lifted curve do not coincide. 
Similar to $\mathcal{V}(t)$, write $\tilde{\mathcal{V}}^k (t)$ for the analogous lift of $\gamma^k(t)$. Then $\mathcal{V}(t)$ exhibits monodromy of order $k$ if $k$ is the smallest integer strictly larger than $0$, such that $\tilde{\mathcal{V}}^k (0) = \tilde{\mathcal{V}}^k (2 \pi k)$. 
In other words, the order of the monodromy is the number of points above any $t$ in the connected component in the image of $\PF_{\M}$ that contains the given point in $\Per_l (d_\mathbb{E}(\cdot, \gamma(\tau) )|_\M)$. 
\newline 
\textbf{Vineyard monodromy}
Following an individual vine (with increasing time $t$) for a given point on $\Per_l (d_\mathbb{E}(\cdot, \gamma(0) )|_\M)$ yields a point in $\Per_l (d_\mathbb{E}(\cdot, \gamma(2 \pi) )|_\M) = \Per_l (d_\mathbb{E}(\cdot, \gamma(0) )|_\M)$. In other words the vines or vineyard induce a map $P_\mathcal{V}$ from $\Per_l (d_\mathbb{E} (\cdot, \gamma(0) )|_\M)$ to itself, which permutes the points in the persistence diagram. We say that the vineyard has monodromy of order $k$ if $k$ is the order of the permutation, that is the smallest integer $k>0$ such that applying this permutation $k$ times yields the identity permutation. 
\end{definition}

\begin{figure}[h!]
    \centering
    \includegraphics[width=0.75\linewidth]{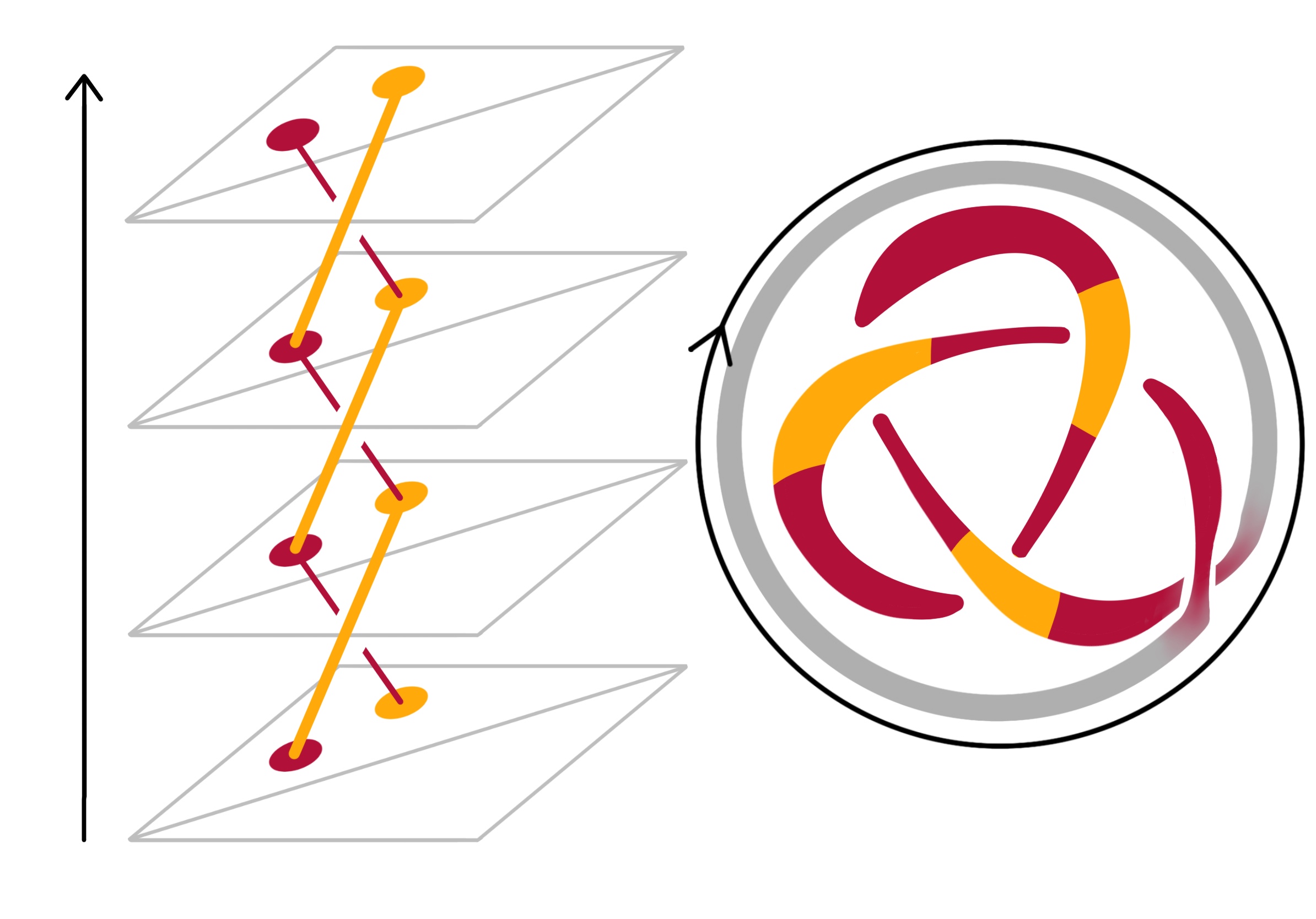}
    \caption{ A schematic overview of how we find monodromy in the vineyard  (left) for a closed braid (right) and observation loop (right, in black). 
    Note that the gray part of the knot does not contribute to monodromy, as the elder rule dictates that the first birth is paired with the last death which causes interference with the desired braiding. By adding a new strand to the outside of the annulus via a trivial twist we ensure that there is an elder vine that sits far away from the diagonal and doesn't interfere with our desired braid.
    To simplify the image, the gray portion  is not shown in the vineyard.  
    }
    \label{fig:Completion1}
\end{figure}

See Figure~\ref{fig:Completion1} for a simple sketch of a knot exhibiting monodromy.  

We conclude here by noting that monodromy can be defined when the vines touch the diagonal, which corresponds more closely to the notion introduced in~\cite{Arya2024}; we refer the interested reader to Appendix~\ref{appendix:diagonal}, where we formalize this definition, but we do not need it for our results.
In addition, there may be self-intersections on the diagonal of the completed vineyard. If the vineyards are non-generic, the situation is significantly more complex as vines can intersect~\cite{turner2023representing}.  As this non-generic case is not needed, we will not consider it any further in this work.

\section{Monodromy and braiding in vineyards}

With the preliminaries and definitions out of the way, we can focus on the main results. The proof of Theorem \ref{thm:Link} is almost completely constructive and proceeds as follows:  
\begin{itemize}
\item We get a link as an input (see Figure \ref{fig:mainResult1} top left).
\item We use Alexander's theorem to represent this as a closed braid that lies very close to an annulus in the `horizontal' plane in $\mathbb{R}^3$, such that the crossings are closely packed. Moreover we add an extra `loop' to the closed braid. This extra loop is there to compensate for the elder rule in persistence  (see Figure \ref{fig:mainResult1} bottom left). 
\item We twist the annulus. This twist is the first step towards independently determining the birth and death times in the persistence diagram (see Figure \ref{fig:mainResult1} right). 
\item We define an observation loop $\gamma(t)$ that follows the twisted annulus on the  `outside', see Figure \ref{fig:NoWiggle}. The vineyard of distance function restricted to the manifold $d_{\mathbb{E}}(\cdot, \gamma(t))|_\M$ at this point has the same topology as the braiding, which in turn is the result of Alexander's theorem with the exception that under and over crossings may be interchanged and the extra loop which we added has been split off as a consequence of the elder rule (compare with Figure \ref{fig:braid_to_ouroboros}). In other words, if we restrict the birth times (of the points in the persistence diagram) depending on the time $t$, then we recover the braiding diagram from Alexander's theorem (deformed), but potentially with the wrong crossings.  
\newline The proof of this statement, i.e. we recover the given closed braid, relies heavily on the fact that the Morse critical points of $d_{\mathbb{E}}(\cdot, \gamma(t))|_\M$ come into two clusters, one close to $\gamma(t)$ and one close to the antipodal point ($\gamma(t+\pi)$) of $\gamma(t)$ on the observation loop, this technical ingredient of the proof is treated in Section \ref{sec:BraidsAndMorse}.
\begin{figure}[h!]
    \centering
    \includegraphics[width=0.75\linewidth]{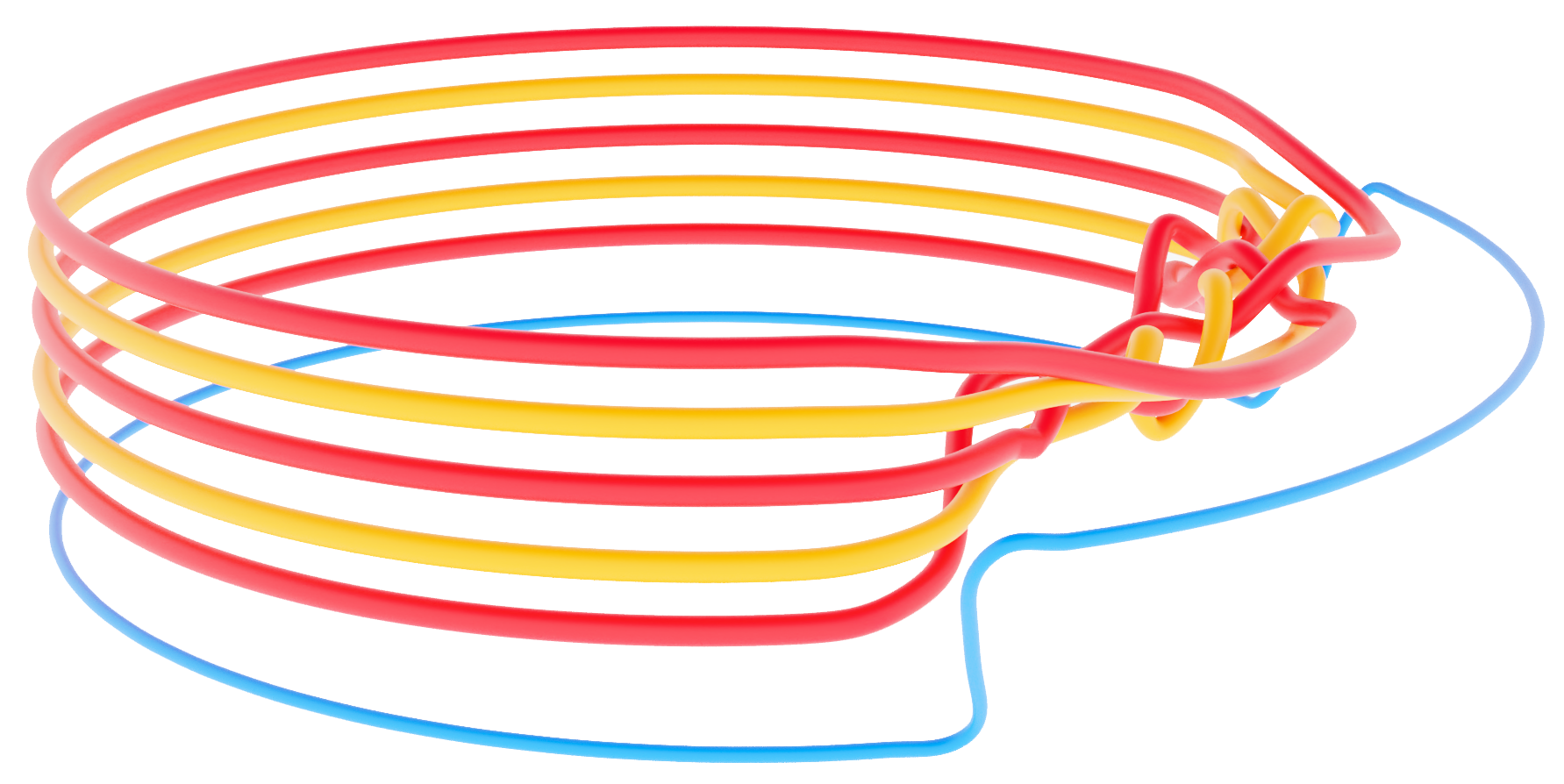}
    \caption{The braid follows a twisted annulus and the observation loop is in its proper place, but without the manipulation of the strands near the antipodal points of the crossing there can be incorrect crossings in the vineyard. 
    }
    \label{fig:NoWiggle}
\end{figure}
\item To ensure that the crossings in the vineyard are the same ones as we were given, 
we need to be able to change the death times of the cycles without influencing the birth times. We now focus on points $\gamma(t)$ near a crossing, as, if we are not near a crossing, the respective heights of the strands are not relevant for the topology. We note that because we consider the Euclidean distance $d_{\mathbb{E}}$ (albeit restricted to the manifold) changing the manifold such that it remains on the same sphere centred on $\gamma(t)$ does not change the birth times. The deaths, on the contrary, occur near the antipodal point ($\gamma(t+\pi)$) on the annulus, and because we have twisted the annulus we can change the birth times by pushing inwards or outwards without influencing the birth times at $\gamma(t+\pi)$, as indicated in Figure \ref{fig:psiManipulation}. Whether we have to push inwards or outwards depends on very precise analysis of the persistence, which takes some four pages in the proof of the main theorem. 
\end{itemize}

\subsection{Braids and Morse theory} \label{sec:BraidsAndMorse}

We begin with some technical preliminary lemmas, which will be essential in our later construction.
We say that a closed braid $B$ is $(\epsilon,R)$-embedded in $\mathbb{R}^3$ if the closed braid is contained in the $\epsilon$-thickening of the circle $C_h(0,R)$ of radius $R$ contained in the horizontal plane. We will  assume throughout this paper that $\epsilon \leq R$. 

We say that the maximal angle that a strand of a braid makes with the horizontal direction is the maximal braid angle $\theta_B$. For a closed braid, we say that the maximal braid angle $\theta_B$ is the angle that the tangent line to a strand at a point $p$ in the braid makes with the normal of the plane $P$ defined by $p$ and the $z$-axis.

We can consider a small $(l+1)$-dimensional $\alpha$-offset $\M$  of a closed braid $B$. 
We interpret `small' here as follows: The offset should be small enough such that no self-intersection, nor intersections with the $z$-axis, occur. In particular, it suffices for the offset to be small compared to the reach~\cite{Federer} of the closed braid. 
Formally, the offset is defined as follows: For $l=0$, we do not add an offset. For $l \geq 1$, we consider the embedding $\mathcal{T} \times 0$ in $\mathbb{R}^3 \times \mathbb{R}^{l-1}$, where $\mathbb{R}^3$ is the space that contains the standard embedding of the torus, in the sense of Definition \ref{def:ClosedBraid}. We take $\M$ to be the offset of the braid $B\times 0$ in $\mathbb{R}^3 \times \mathbb{R}^{l-1}$, that is, the boundary of $(\mathcal{T}\times 0 ) \oplus B(0,\alpha)$, where $\oplus$ denotes the Minkowski sum. The resulting manifold $\M$ can then be embedded in $\mathbb{R}^d$, where the $\mathbb{R}^3$ that contained $\mathcal{T}$ corresponds to the first three coordinates. 
The maximal braid angle in this context is defined as the maximum over the braid angles: Let $p \in \M$ and $T_p\M$ be its tangent space, then the braid angle is the angle between the normal $n$ of the hyperplane $P \times \mathbb{R}^{d-3}$, where $P$ is the plane defined by the $z$-direction and the point $p$. 

\begin{lemma}[clustering of critical points] \label{Lem:Step2} Let $B$ be an $(\epsilon,R)$-embedded closed braid and suppose that $\M$ is its $(l+1)$-dimensional $\alpha$-offset, with braid angle $\theta_B$. If $p \in \mathbb{R}^d$ satisfies $d (p, C_h(0,R)) \leq \eta$, $C_h(0,R)$ is parametrized by $s(\theta)$, and the closest point $p'$ of $p$ on $C_h(0,R)$ satisfies $p' =s(0)$, then the function $x \mapsto d_\mathbb{E}(x, p)|_\M$ has no critical points at the closest point $\beta (\theta)$ on $B$ to $x$ to as long as  
\begin{align}  \frac{\theta}{2} + \theta_B + \arcsin \frac{\epsilon} { 2 R \sin (\frac{\theta}{2})  -\eta } +\arcsin \frac{\eta} { 2 R \sin (\frac{\theta}{2})  }  <\frac{\pi}{ 2} .
\label{eq:BoundL1}
\end{align} 
This implies in particular that there is no topological change as long as \eqref{eq:BoundL1} is satisfied. 
\end{lemma}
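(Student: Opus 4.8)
\emph{Proof plan.} The plan is to convert the statement ``$x$ is not a critical point'' into a bound on a single angle, and then to recognise the left-hand side of \eqref{eq:BoundL1} as a sum of four geometric contributions to that angle.

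First I would use the standard characterisation that $x \in \M$ is a critical point of $d_\mathbb{E}(\cdot, p)|_\M$ exactly when $x-p$ is orthogonal to $T_x\M$. Let $\beta(\theta)$ be the closest point of $x$ on $B$ and let $\tau$ denote the longitudinal (``braid'') tangent direction of $\M$ at $x$; up to a correction of order $\alpha$ coming from the curvature of $B$ --- negligible because the offset is small relative to the reach --- $\tau$ coincides with the unit tangent to $B$ at $\beta(\theta)$ and lies in $T_x\M$. A necessary condition for $x$ to be critical is therefore $(x-p)\cdot\tau = 0$. Since the offset vector $x-\beta(\theta)$ is normal to $\tau$ by construction, $(x-p)\cdot\tau = (\beta(\theta)-p)\cdot\tau$, so it suffices to prove that the angle between $\beta(\theta)-p$ and $\tau$ is strictly smaller than $\pi/2$; orthogonality is then impossible and $x$ cannot be critical.

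Next I would estimate that angle by comparing $\beta(\theta)-p$ and $\tau$ against the ``ideal'' chord $s(\theta)-s(0)$ and the tangent to $C_h(0,R)$ at $s(\theta)$, using subadditivity of angles between directions. A half-angle computation gives that $s(\theta)-s(0)$ has length $2R\sin(\theta/2)$ and makes an angle of exactly $\theta/2$ with the circle tangent at $s(\theta)$; this is the first term. By the definition of the maximal braid angle, $\tau$ differs from that circle tangent by at most $\theta_B$, giving the second term. To pass from the ideal chord to the true segment $p\,\beta(\theta)$ I would move the endpoints one at a time: moving $s(0)$ to $p$ displaces one endpoint by at most $\eta$ across a chord of length $2R\sin(\theta/2)$, and then moving $s(\theta)$ to $\beta(\theta)$ displaces the other endpoint by at most $\epsilon$ across the now-shortened chord of length at least $2R\sin(\theta/2)-\eta$. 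Since moving one endpoint of a segment by $\delta$ rotates its direction by at most $\arcsin(\delta/L)$, with $L$ the distance to the fixed endpoint, these two steps contribute $\arcsin\frac{\eta}{2R\sin(\theta/2)}$ and $\arcsin\frac{\epsilon}{2R\sin(\theta/2)-\eta}$ respectively.

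Summing the four bounds shows the angle between $\beta(\theta)-p$ and $\tau$ is at most the left-hand side of \eqref{eq:BoundL1}; whenever this is strictly below $\pi/2$ the required orthogonality fails, so there is no critical point of $d_\mathbb{E}(\cdot,p)|_\M$ projecting to $\beta(\theta)$. The concluding ``no topological change'' clause then follows from standard Morse theory, since the homotopy type of the sublevel sets can only change across a critical value. I expect the principal difficulty to lie in the higher-dimensional offset bookkeeping: one must confirm that the $O(\alpha)$ discrepancy between the genuine tangent $\tau\in T_x\M$ and the braid tangent at $\beta(\theta)$ (together with the remaining normal directions of the tube) does not disturb the reduction to the single scalar condition $(\beta(\theta)-p)\cdot\tau=0$, and one must be careful to keep the exact denominators $2R\sin(\theta/2)$ and $2R\sin(\theta/2)-\eta$ rather than cruder bounds. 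This is precisely where the smallness of $\alpha$ relative to the reach is used.
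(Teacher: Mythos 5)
Your proposal is correct and follows essentially the same route as the paper's proof: the same reduction of criticality to the scalar condition $\langle \beta'(\theta), \beta(\theta)-p\rangle = 0$, the same four-term angle decomposition (the half-angle $\theta/2$ for the chord of $C_h(0,R)$, the braid angle $\theta_B$, and the two $\arcsin$ corrections from perturbing the endpoints by $\eta$ and $\epsilon$ with exactly the denominators $2R\sin(\theta/2)$ and $2R\sin(\theta/2)-\eta$), and the same appeal to Morse theory for the ``no topological change'' clause. Your explicit justification that $(x-p)\cdot\tau = (\beta(\theta)-p)\cdot\tau$ because the offset vector is normal to the longitudinal tangent is a small but welcome addition of rigor that the paper leaves implicit.
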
 
See Figure \ref{fig:Step4} for an illustration of this configuration, and Appendix~\ref{appendix:proofs} for the full proof.

\begin{remark}\label{Rem:ThetaEqZeroOrPi} 
The important conclusion from the bound of Lemma \ref{Lem:Step2} is that by choosing $\epsilon< \eta \simeq \theta_B$ small compared to $\min \{  R, \pi\}$, the bound \eqref{eq:BoundL1} is satisfied as long as $\theta_B \ll  \theta < \pi - 4 \theta_B$. %
\end{remark}

We say that an oriented closed braid $B$ that is $(\epsilon , R)$-embedded is \emph{$\delta$-circular} if its parametrization according to arc length $\beta$ satisfies $\left|\frac{\beta (t)}{R^2} +  \ddot{\beta} (t)  \right| <  \delta$, where we use Newton's notation for the derivative.

\begin{lemma}[Morse indices of clustered critical points] \label{Lem:Step3} 
Let $B$ be an oriented  $(\epsilon , R)$-embedded, $\delta$-circular closed braid, with braid angle $\theta_B$. 
Suppose $p \in \mathbb{R}^d$ satisfies $d (p, C_h(0,R)) \leq \eta$, $C_h(0,R)$ is parametrized by $s(\theta)$, and the closest point $p'$ of $p$ on $C_h(0,R)$ satisfies $p' =s(0)$.

If an oriented  $(\epsilon , R)$-embedded, $\delta$-circular closed braid with $n$ strands, with braid angle $\theta_B$,
$\epsilon \ll R $, and  
\[ 
\frac{6\epsilon}{R} + \delta (R+\eta) \leq \frac{1}{R^2} ( R- \epsilon) ( R- \eta) ,
\] 
and $\epsilon< \eta \simeq \theta_B$ are small compared to $\min \{  R, \pi\}$, 
then $d(\cdot ,p)|_B$ has $2n$ critical points, $n$ maxima and $n$ minima.  
Let $l\geq 1$. If $\M$ is an $(l+1)$-dimensional $\alpha$ offset $\M$ of the same type of braid satisfying the same conditions then $d_\mathbb{E}(\cdot ,p)|_\M$ has $4n$ critical points, $n$ maxima and $n$ minima and $n$ saddle points of Morse index $l$ {and $n$ saddle points of Morse index $1$}.  
\end{lemma}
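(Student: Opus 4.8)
The plan is to prove the two assertions in turn: first for the one–dimensional braid $B$, then for its $(l+1)$-dimensional offset $\M$, in each case locating the critical points of the distance function and reading off their Morse index from a Hessian computation. Throughout I would treat $\alpha$ as a free parameter chosen \emph{last}, small compared with the reach of $B$ and with all the other scales.

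\textbf{The braid.} First I would apply Lemma~\ref{Lem:Step2} together with Remark~\ref{Rem:ThetaEqZeroOrPi} to confine all critical points of $d_\mathbb{E}(\cdot,p)|_B$ to two small angular windows: one around the angle $\theta=0$ of the closest point $p'=s(0)$ and one around the antipodal angle $\theta=\pi$. Since each strand is monotone with respect to the longitude, the meridian disc at a fixed angle meets $B$ in exactly $n$ points, so inside each window $B$ is a union of $n$ arcs, each monotone in $\theta$ and $C^2$-close to a circular arc. On such an arc I would compute the second derivative of $\tfrac12 d_\mathbb{E}^2(\cdot,p)$ with respect to arc length: using $|\dot\beta|=1$, the $\delta$-circularity $\ddot\beta=-\beta/R^2+E$ with $|E|<\delta$, and the bounds $|\beta|\in[R-\epsilon,R+\epsilon]$, $|p-s(0)|\le\eta$, this quantity is bounded below near $\theta=0$ (and above near $\theta=\pi$) by the idealized curvature term $\tfrac{1}{R^2}(R-\epsilon)(R-\eta)$ minus the perturbation $\tfrac{6\epsilon}{R}+\delta(R+\eta)$. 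The displayed inequality is exactly what forces this to be sign-definite, so $\tfrac12 d^2$ is strictly convex on each near-arc and strictly concave on each far-arc; periodicity of the arc forces the single extremum to lie in the window. Hence the $n$ near-arcs carry $n$ nondegenerate minima and the $n$ far-arcs carry $n$ nondegenerate maxima, for exactly $2n$ critical points.

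\textbf{Lifting to the offset.} Since $\M$ is the boundary of the $\alpha$-tube around $B$ with $\alpha$ below the reach, every $y\in\M$ has a unique foot point $x\in B$ and unit normal $\nu\in N_xB$ with $y=x+\alpha\nu$, and the outward normal to the hypersurface $\M$ at $y$ is exactly $\nu$. Thus $y$ is critical for $d_\mathbb{E}(\cdot,p)|_\M$ iff $y-p\parallel\nu$; substituting $y=x+\alpha\nu$ forces $x-p\in\mathbb{R}\nu\subset N_xB$, which is precisely criticality of $x$ for $d_\mathbb{E}(\cdot,p)|_B$, with $\nu=\pm(p-x)/\rho$, $\rho=|p-x|$. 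So each of the $2n$ critical points $x$ lifts to exactly two critical points, a near point $y_+=x+\alpha(p-x)/\rho$ at distance $\rho-\alpha$ and a far point $y_-=x-\alpha(p-x)/\rho$ at distance $\rho+\alpha$, giving $4n$ in total. The crucial simplification is that at a critical $x$ one has $p-x\in N_xB$, so $p$ lies in the affine normal plane $x+N_xB$ that contains the fibre sphere $\mathbb{S}^l$ of radius $\alpha$; hence the restriction of $d_\mathbb{E}(\cdot,p)$ to the fibre is \emph{exactly} the distance from an external point to a round $l$-sphere. At each lift I would split $T_y\M$ into the $l$ meridional directions tangent to the fibre and the single longitudinal direction, and use $\mathrm{Hess}(d|_\M)(u,u)=\tfrac{1}{|y-p|}|u|^2+\langle\hat r,\mathrm{II}_\M(u,u)\rangle$ with $\hat r=(y-p)/|y-p|$. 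The meridional block is governed by the exact sphere computation: the near point is a fibre-minimum ($l$ positive eigenvalues), the far point a fibre-maximum ($l$ negative eigenvalues). For the longitudinal direction one checks, using $\hat r=-(p-x)/\rho$ at \emph{both} $y_\pm$, that the longitudinal Hessian equals $\tfrac{1}{\rho\mp\alpha}-\langle(p-x)/\rho,\ddot\beta\rangle$, which has the same sign as $\tfrac1\rho(\tfrac12 d^2|_B)''(x)$ up to an $O(\alpha)$ error; since the braid step makes the latter bounded away from $0$, the longitudinal sign is inherited from the type of $x$. Combining the blocks gives: near lift of a minimum $=$ minimum (index $0$); far lift of a minimum $=$ index $l$; near lift of a maximum $=$ index $1$; far lift of a maximum $=$ maximum (index $l+1$). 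With $n$ minima and $n$ maxima on $B$, this yields $n$ minima, $n$ maxima, $n$ index-$l$ saddles and $n$ index-$1$ saddles on $\M$, as claimed.

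\textbf{Main obstacle.} The step I expect to demand the most care is the longitudinal Hessian on $\M$: one must show that the $\alpha$-offset and the rotation of $\nu$ along $B$ contribute only $O(\alpha)$ corrections, and, equally importantly, that the off-diagonal Hessian terms coupling the longitudinal and meridional blocks are also $O(\alpha)$, so that the block-diagonal model controls the actual index. This is exactly where the nondegeneracy furnished by the displayed inequality is indispensable: it provides a uniform spectral gap on both blocks that an $O(\alpha)$ perturbation cannot close once $\alpha$ is chosen small after $\epsilon,\eta,\delta,\theta_B$. The remaining content is the routine but lengthy bookkeeping of these estimates, which I would relegate to an appendix.
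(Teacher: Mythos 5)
Your proof follows essentially the same route as the paper's: confinement of the critical points to the two angular windows via Lemma~\ref{Lem:Step2} and Remark~\ref{Rem:ThetaEqZeroOrPi}, followed by the second-derivative computation for $\tfrac12|\beta-p|^2$ in which $\delta$-circularity reduces the Hessian to $2\langle\beta/R^2,p\rangle$ plus the error $\tfrac{6\epsilon}{R}+\delta(R+\eta)$, with the displayed inequality forcing sign-definiteness and hence exactly one extremum per strand per window. For the offset, the paper merely asserts in one sentence the two-to-one correspondence of critical points and the resulting index assignment (minimum $\mapsto$ \{minimum, index-$l$ saddle\}, maximum $\mapsto$ \{maximum, index-$1$ saddle\}); your normal-bundle and block-Hessian argument is a correct and more complete justification of that step.
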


Again, we refer to Appendix~\ref{appendix:proofs} for the proof.

Given these $4n$ critical points, we can now consider the persistence diagram that results.  Although we work with extended persistence throughout this paper in order to avoid points at infinity and establish a perfect pairing of critical points, we note that in fact for the purposes of establishing monodromy in our knot offset, it suffices to restrict our attention to the behavior of the ordinary points as well as one single extended point, all above the diagonal, as those that contribute to the construction of knot or link as mentioned in Theorem \ref{thm:Link} are born and die `on the way up',  that is in the first line of the sequence in \eqref{Eq:ExactSequence}.

\begin{corollary}[The behaviour of the persistence diagram due to clustering] \label{Lem:Step4} 

Under the same assumptions as in Lemma \ref{Lem:Step3},     
the maxima and minima of $d_\mathbb{E}(\cdot ,p)|_B$ and $d_\mathbb{E}(\cdot ,p)|_\M$  as well as the saddle points of $d_\mathbb{E}(\cdot ,p)|_\M$ can be divided into two groups, one group occurring at low 
values and corresponding to $H_0$ births of cycles in the persistence diagram and one group occurring at high 
values and corresponding to $H_0$ deaths in the persistence diagram.  
Here, under the assumption\footnote{\label{footnote:lowhigh} More generally, Equation \eqref{eq:BoundL1} of Lemma \ref{Lem:Step2} divides $\M$ ($B$ respectively) into three regions, the part close to $p$ where Morse critical points can occur, a large region where no Morse critical points can be found, and finally the part furthest from $p$ where again one may find Morse critical points. The high and low in this corollary should be interpreted as such. } of Remark \ref{Rem:ThetaEqZeroOrPi}, low means $\lesssim  R \,\theta_B $, where the $\lesssim$ hides a constant, and high means $\geq 2R - 6 R \, \theta_B$.
For $H_l$ the situation is identical except for  one change, namely that a single birth occurs at a high value. In `ordinary' persistence theory this cycle lives forever, while in extended persistence it dies at the global minimum, and in fact lies below the diagonal. 
\end{corollary}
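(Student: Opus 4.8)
The plan is to upgrade the angular clustering of Lemma~\ref{Lem:Step2} into a statement about critical \emph{values}, read off the ordinary $H_0$ and $H_l$ pairings from a Morse-theoretic analysis of the sublevel filtration, and finally treat the single essential class through the extended sequence~\eqref{Eq:ExactSequence}.

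First I would translate angles into values. By Lemma~\ref{Lem:Step2} and Remark~\ref{Rem:ThetaEqZeroOrPi}, under the smallness hypotheses of Lemma~\ref{Lem:Step3} every critical point of $d_\mathbb{E}(\cdot,p)|_\M$ and of $d_\mathbb{E}(\cdot,p)|_B$ lies over the arc $\theta\lesssim\theta_B$ around the foot point $p'=s(0)$ or over the arc $\theta\gtrsim\pi-4\theta_B$ around the antipode $s(\pi)$, while the band between them is critical-point free. Inserting these two angular ranges into the chord length $2R\sin(\theta/2)$, and adding the $O(\alpha+\eta+\epsilon)$ corrections from the offset and from $d(p,C_h(0,R))\le\eta$, turns them into the two value windows $\lesssim R\theta_B$ (low, near $p$) and $\ge 2R-6R\theta_B$ (high, near the antipode). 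Combined with the index count of Lemma~\ref{Lem:Step3}, this shows the low window carries the $n$ minima and (for $\M$) the $n$ index-$l$ saddles, and the high window the $n$ index-$1$ saddles and the $n$ maxima, which is the claimed two-group split.

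Next I would exploit that the middle band contains no critical value, so $\M_{\le v}$ is constant up to homeomorphism across it. Just past the low window it is a disjoint union of $n$ pieces, one per strand approaching $p$ (contractible caps, which the index-$l$ saddles complete to meridian spheres $\Sp^l$); hence the $n$ classes of $H_0$ are all born in the low window, and since there is no merging saddle before the high window these components cannot die until then. The finite $H_0$ deaths therefore occur at the index-$1$ saddles of the high window, giving births low and deaths high; the classes surviving the upward sweep are paired, through the downward part of~\eqref{Eq:ExactSequence}, with the global maximum of their component, hence still born low and dying high, above the diagonal. The identical tube picture gives the ordinary $H_l$ points: an index-$l$ saddle in the low window completes a meridian $\Sp^l$, so these cycles are again born low, and the maxima of the high window kill all but finitely many of them, producing low births and high deaths.

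The one exceptional class is the essential top class of the offset tube over a component---the one dual to the degree-$l$ braiding cycles---which appears only when the final maximum of the high window is attached, so its birth is at a high value and in ordinary persistence it never dies. Its finite death is supplied by the downward sequence in~\eqref{Eq:ExactSequence}: by Poincar\'e--Lefschetz duality it is paired with the global minimum, so it dies at the global-minimum value, and since that death coordinate lies below its birth coordinate it sits below the diagonal. I expect this last step to be the real obstacle. The low/high split is geometry plus the quoted lemmas, and the ordinary $H_0$ and $H_l$ pairings follow from the constant-topology-across-the-band argument with standard handle attachments; what needs care is verifying through~\eqref{Eq:ExactSequence} that exactly one class per component behaves exceptionally, using duality to locate its downward death at the global minimum, and checking that the genericity together with the $\delta$-circular and $\epsilon\ll R$ hypotheses excludes coincident critical values that would break the clean handle-by-handle reading.
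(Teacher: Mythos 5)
Your proposal follows essentially the same route as the paper's proof: both convert the angular clustering of Lemma~\ref{Lem:Step2} and Remark~\ref{Rem:ThetaEqZeroOrPi} into the two value windows, and then use the known topology of $B(p,r)\cap B$ (resp.\ $B(p,r)\cap\M$) at $r\simeq R$ ($n$ segments, resp.\ tube segments) versus large $r$ (the full circle, resp.\ torus) to force the stated assignment of births and deaths --- the paper phrases this step as a pigeonhole count against the critical-point inventory of Lemma~\ref{Lem:Step3}, while you read it handle by handle, but the content is identical. Your closing discussion of the exceptional extended class goes beyond what the paper actually proves (its proof is silent on that point); the only caveat is that the class born at a high value is the longitudinal/fundamental class in degree $l$ (the Poincar\'e dual of the meridian spheres, reducing to the fundamental class of the circle in the case of $B$), not the top class of $\M$, which lives in degree $l+1$.
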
 

\begin{figure}[h!]
    \centering
    \includegraphics[width=0.95\linewidth]{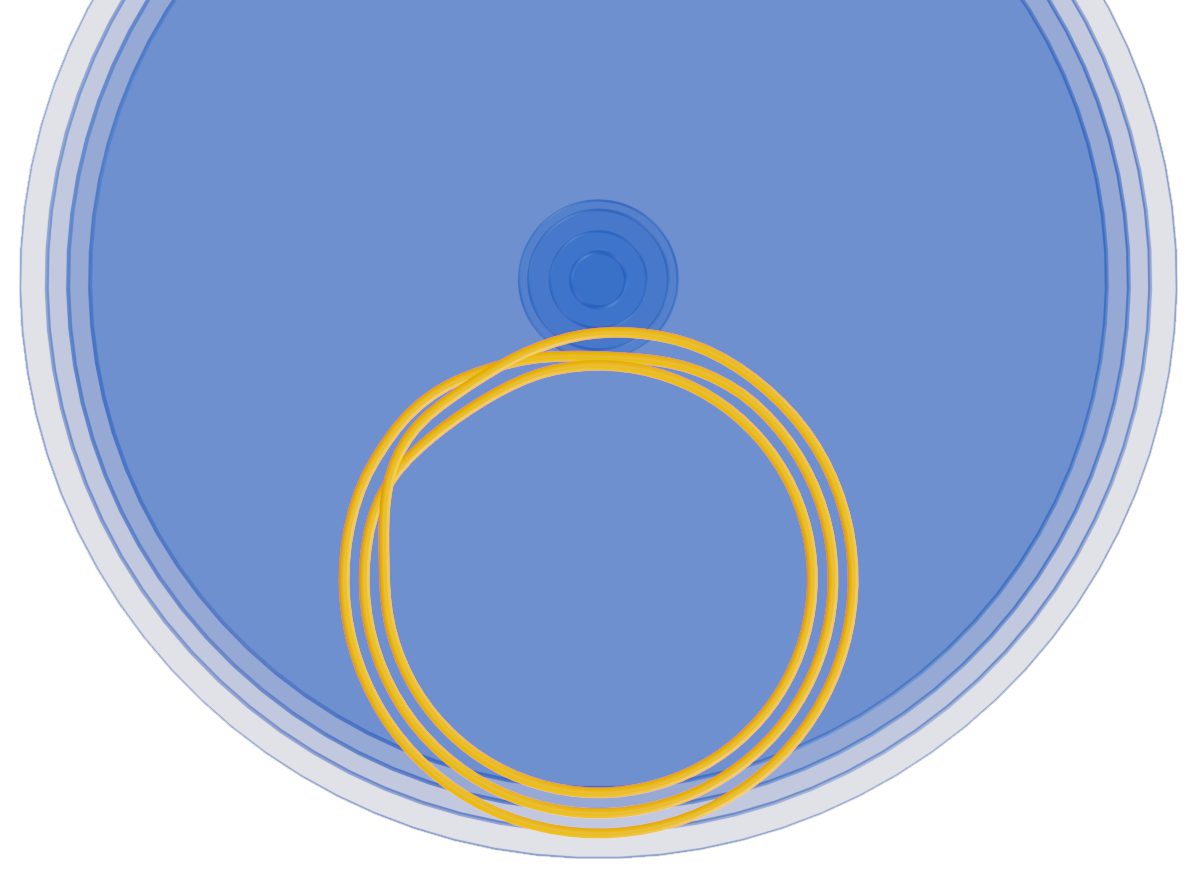}
    \caption{A figure illustrating the statement of Corollary \ref{Lem:Step4}, highlighting the intersection of the ouroboros (in yellow) and a family of 3D growing spheres (in blue) that highlight the $n$ births and $n$ deaths in $H_0$. 
    }
    \label{fig:Step4}
\end{figure}

Let $f_1$ and $f_2$ be two functions on the same manifold $\M$. We say that these two functions are handle-equivalent if the handle decomposition is the same and the times of insertion of these handles are also identical. 
We have the following observation. 
\begin{corollary}[Visualizing as a height function] \label{Cor:Equivalent}
Let $B$ be $(\epsilon,R)$-embedded closed braid and suppose that $\M$ is its $(l+1)$-dimensional $\alpha$-offset, with braid angle $\theta_B$. We have that the $B$ is a circle and the manifold $\M$ is diffeomorphic to the torus $\mathbb{S}^{l+1} \times \mathbb{S}^1$. 
Under the same conditions as Lemma~\ref{Lem:Step3}  (and Corollary~\ref{Lem:Step4}), we have that $d(\cdot ,p)|_B$ and $d(\cdot ,p)|_{\M}$ are handle equivalent to the height function of the embedding depicted in Figure \ref{fig:EquivalentEmbedding}.  
\end{corollary}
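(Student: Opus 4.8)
The plan is to read off the handle decompositions of the two distance functions directly from the Morse data already established in Lemma~\ref{Lem:Step3} and Corollary~\ref{Lem:Step4}, and then to arrange the model embedding of Figure~\ref{fig:EquivalentEmbedding} so that its height function has the identical \emph{ordered} list of critical points. Handle-equivalence then follows from the fundamental handle-attachment theorem of Morse theory \cite{milnor1963morse}: a Morse function's handle decomposition, together with its insertion times, is determined by the indices of its critical points and the order of their critical values.

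First I would pin down the diffeomorphism types so that the statement makes sense. The closed braid $B$ here is a single-component closed braid, hence a smoothly embedded circle $B\cong\mathbb{S}^1$. For $l\geq 1$ the offset $\M$ is, by the construction in Section~\ref{sec:BraidsAndMorse}, the boundary of a tubular neighbourhood of $B$ in $\mathbb{R}^d$; since the normal bundle of an embedded circle is trivial, this boundary is a trivial sphere bundle over $\mathbb{S}^1$, i.e. the product torus claimed in the statement (and for $l=0$ one simply has $\M=B\cong\mathbb{S}^1$). These diffeomorphisms are what let us regard $d(\cdot,p)|_B$, $d(\cdot,p)|_\M$, and the two model height functions as living on the same circle and the same torus, so that handle-equivalence is meaningful.

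Next I would extract the handle decompositions. By Lemma~\ref{Lem:Step3} both functions are Morse, with $2n$ critical points on $B$ ($n$ of index $0$ and $n$ of index $1$) and $4n$ on $\M$ ($n$ each of indices $0,1,l,l+1$). By Corollary~\ref{Lem:Step4} the critical values fall into a \emph{low} cluster and a \emph{high} cluster (all $H_0$ births low, all $H_0$ deaths high, with the single anomalous $H_l$ generator born high). Sweeping $d(\cdot,p)$ upward therefore attaches all the handles of the low cluster first and all those of the high cluster afterwards; this ordered sequence of attachments, indexed by the critical values, \emph{is} the handle decomposition. I would then take the embedding of Figure~\ref{fig:EquivalentEmbedding}---a ``combed'' circle with $n$ bottom lobes at a common low height and $n$ top lobes at a common high height, and the analogous tube for the torus---and verify that its height function is Morse with exactly this same ordered $(\text{index},\text{cluster})$ data. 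Since a height function's handle decomposition is read off from its ordered critical points in the same way, the two decompositions then coincide and their insertion times match, giving handle-equivalence; concretely, between consecutive critical values the sublevel sets do not change topology, so matching the ordered critical data builds a level-compatible diffeomorphism of sublevel sets handle by handle.

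The main obstacle is to make ``the handle decomposition is the same'' precise beyond merely matching the number, index, and cluster of the critical points: one must check that the \emph{attaching maps} of corresponding handles agree under the identifications of the first step, so that the decompositions are literally identical and not just abstractly isomorphic. In particular, within each cluster the order of the critical values of $d(\cdot,p)$ need not match that of the model height function, so I would argue that handles within a single cluster do not interact---their attaching spheres sit over distinct near/antipodal passages and are disjoint---and hence may be inserted in any order (or simultaneously) without changing the decomposition; this is exactly what allows the two-phase low/high structure to be matched to the model. The one feature requiring genuine care is the anomalous $H_l$ generator that is born in the high cluster (and, in extended persistence, dies at the global minimum, lying below the diagonal): the tube in Figure~\ref{fig:EquivalentEmbedding} must be shaped so that its height function carries precisely this feature, which is where the detailed geometry of the model embedding is used.
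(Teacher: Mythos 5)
Your proposal is correct and follows the route the paper intends: the paper states this corollary as an unproved ``observation'' immediately after Lemma~\ref{Lem:Step3} and Corollary~\ref{Lem:Step4}, and the implicit justification is exactly your argument --- read off the ordered Morse data (indices together with the low/high clustering) from those two results, note that $B\cong\mathbb{S}^1$ and $\M\cong\mathbb{S}^{l+1}\times\mathbb{S}^1$ so the functions live on the same manifolds as the model, and observe that the model embedding is built to have the identical ordered critical data. The one small remark is that the within-cluster ordering issue you flag is resolved in the paper by construction rather than by a handles-commute argument: the caption of Figure~\ref{fig:EquivalentEmbedding} stipulates that the model is to be read as a genuinely three-dimensional embedding whose critical values are the actual $b^i$ and $D^J$ (in particular $D^{II}$ need not exceed the other $D^J$), so the insertion times match literally, which is what the paper's definition of handle-equivalence requires.
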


\begin{figure}[h!]
    \centering
    \includegraphics[width=0.50\linewidth]{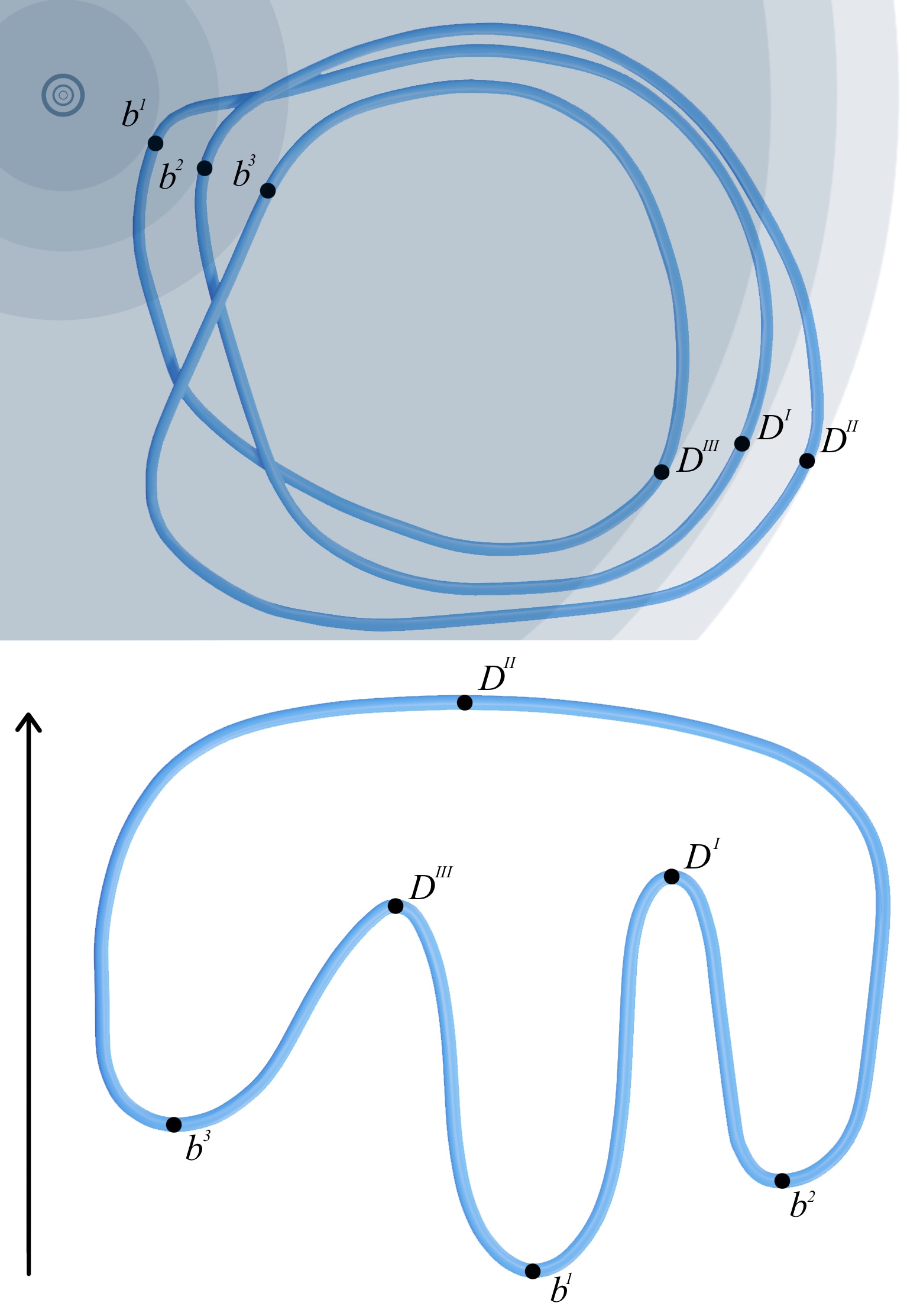}
    \caption{Top: The closed braid. Bottom: The embedding with equivalent height function. The arrow indicates the direction of the height function. The $b^i$s indicate the birth times and $D^J$, where $J$ is a roman numeral {(i.e. $J \in \{ I, II, III\}$)}, the death times. We stress that the picture should be interpreted in a 3D way, and in particular $D^{II}$ does not have to be larger than $D^J$, with $J \neq II$. 
    {We stress that the critical point with the highest value of the Morse function ($D^{II}$ in the figure) corresponds to a death only in extended persistence, in non-extended persistence, only a $1$-cycle is born there.} }
    \label{fig:EquivalentEmbedding}
\end{figure}

\begin{remark}\label{rem:ConditionsAreEasyToSatisfy}
We stress that, given a closed braid $B$, it is not difficult to adjust the embedding such that it is  $(\epsilon , R)$-embedded and {$\delta$-circular} with $\epsilon$ and $\delta$ as small and $R$ as large as you like. Because if $B$ is a closed braid, its parametrization $\beta(t)$ can be written as $\beta (\theta ) = n(\theta) 
+ \rho (\sin (\theta), \cos (\theta) , 0 , \dots, 0)$ for some $\rho>0$, with $n$ normal to  $(\cos (\theta), -\sin (\theta) , 0 , \dots, 0)$. By redefining $\beta (\theta ) = \tilde{\epsilon}  n(\theta) 
+ R (\sin (\theta), \cos (\theta) , 0 , \dots, 0)$ for sufficiently small $\tilde{\epsilon}$ the  {$\delta$-circular} $(\epsilon , R)$-embedding can be achieved. 
By the same argument the observation loop (to be defined in the proof of Theorem \ref{thm:Link}) can be made arbitrarily close in terms of $\eta$ to a circle with radius $R$, so that it has reach as close to $R$ as one likes, by \cite[Theorem 4.19]{Federer}, and hence has a nice tubular neighbourhood of that size.    
\end{remark}

\subsection{Vineyard braiding} %

We will now prove (one of) the main statement(s) of the paper: 

\thmLink*

\begin{proof} %
The proof of this theorem is constructive. Thanks to Alexander's theorem, Theorem \ref{thm:Alexander}, a given link can be represented as a closed braid. {To every connected component of the closed braid we add an extra loop (using a Reidemeister move of type I, so as not to disturb the equivalence class of the closed braid) as in Figure \ref{fig:mainResult1}.} This Reidemeister move may introduce extra crossings, but at most $\mathcal{O} (s\cdot K)$, where $s$ denotes the number of strands in the original closed braid and $K$ the number of connected components of the loop. We now write $C$ for the number of crossings of the resulting closed braid. We write $n$ for the resulting number of strands, for which we see that $n = s+K$. 
We define/construct a particular embedding of the closed braid such that the original braid appears in the vineyard. 
Essentially, the closed braid diagram is parametrized by $\rho$ and $\theta$, while the embedding of the closed braid in $\mathbb{R}^3$ is parametrized by $\rho$, $\theta$, and $h$, as in Figure \ref{fig:braid_to_ouroboros}.
Roughly speaking, the parameter $\rho$ corresponds to the birth time, $\theta$ is the parameter of the observation loop $\gamma$ (to be defined in Step 3) and $h$ corresponds to the death time, again see Figure \ref{fig:braid_to_ouroboros}.

Our construction proceeds stepwise: 
\begin{itemize} 
\item \textbf{STEP 1} We start with an embedding which is close to its annular braid  diagram, by which we mean that the embedding of the closed braid lies in a neighbourhood of an annulus in the plane and the braid is planar with the exception of small neighbourhoods of the points of crossing. We write $\theta, r$ for the coordinates of the annulus, which are polar coordinates in the plane restricted to the annulus.
\item \textbf{STEP 2} We then modify (if necessary) the braid such that the crossings are equally parsed on one side of the annulus, that is, if $\theta$ is the angle that parametrizes the annulus, see Figure \ref{fig:coordinates}, then the crossings are contained in the interval $[0+\frac{\pi}{8(C+1)} , \pi-\frac{\pi}{8(C+1)}]$, where $C$ is the number of crossings. 
By equally parsed we mean that there is only one crossing in each of the intervals $[ \pi \frac{\tilde{j} }{8 (C+1)} , \pi \frac{\tilde{j}+1}{8 (C+1)}  ]$ where $\tilde{j} \in \{  8, 16, \dots , 8 C\}$. 
\item \textbf{STEP 3a} We now modify the embedding of the braid in an angular interval $[\pi-\frac{\pi}{4 (C+1)} , 2 \pi ] \cup [0, \frac{\pi}{2(C+1)}]$. This interval should be interpreted in a periodic manner. 
We do so by twisting the annulus (and by extension the almost annular braid) $90$ degrees in the direction orthogonal to the plane into which the annulus was originally embedded, see Figure \ref{fig:mainResult1}. We do so in such a way that the twisted annulus, and by extension part of the braid in the angular interval $[\pi+\frac{\pi}{8 (C+1)}, 2 \pi -\frac{\pi}{8 (C+1)}]$,  is now close to a cylinder. We do this in a way that preserves cylindrical coordinates, that is, if $\theta$ were the planar angular coordinate of a point on the annulus, then after twisting, $\theta$ is the cylindrical coordinate of the corresponding point. 
We denote the resulting twisted annulus by $\mathcal{A}_T$. 
\item \textbf{STEP 3b} We define an observation loop $\gamma$ to be the curve that follows that twisted annulus on the outside at a constant distance (less than $\eta$, with $\eta$ as in Lemmas \ref{Lem:Step2} and \ref{Lem:Step3}), see e.g. Figure \ref{fig:computed-braids}. %
We also define coordinates $\tilde{\theta}, \rho, \tilde{\psi}$ with respect to this {observation loop}  in a tubular neighbourhood of size $\mathcal{O}(\eta)$ 
of the observation loop. 
This is possible thanks to Remark \ref{rem:ConditionsAreEasyToSatisfy}. The coordinate $\rho$ of a point $y$ in the tubular neighbourhood is the distance to $\gamma$. The coordinate  
 $\tilde{\psi}$ is the angle between a point $y$ in the tubular neighbourhood of the curve, its closest point on curve $\pi_{\gamma} (y)$, and the closest point of this point on the annulus $\pi_{\mathcal{A}_T}( \pi_{\gamma} (y))$. See Figure \ref{fig:coordinates}.

\begin{figure}[h!]
    \centering
    \includegraphics[width=0.55\linewidth]{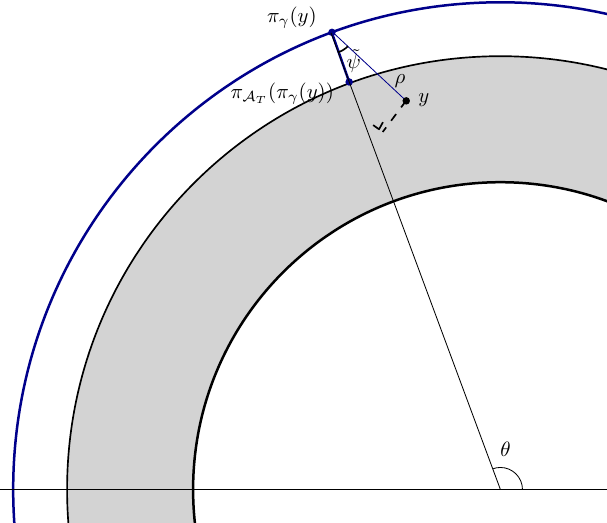}
    \caption{Figure illustrating the notation (the coordinates $(\theta, \rho, \tilde{\psi})$) in step 3b of Theorem~\ref{thm:Link}.  The observation loop is the outer loop shown in dark blue, and (the flat part of) the twisted annulus is shown in grey. 
    }
    \label{fig:coordinates}
\end{figure}
\end{itemize}

\subparagraph{Intermezzo: The persistence diagram for a point on the observation curve} Before we continue with the construction, we now discuss the persistence diagram for a given point. In the following step we'll further modify the closed braid in the angular interval $[\pi-\frac{\pi}{4 (C+1)} , 2 \pi ] \cup [0, \frac{\pi}{2(C+1)}]$, but for now we consider the braid fixed. We also assume for now that the link has only one connected component, that is, it is a knot. 
Thanks to Lemmas \ref{Lem:Step2} and Corollaries \ref{Lem:Step4}-\ref{Cor:Equivalent} we know that the braid and observation curve are chosen in such a way that all the births occur first and then all the deaths occur, in both distance order as well as in order along the braid. Let us in particular consider the equivalent embedding of Corollary \ref{Cor:Equivalent}, where we let $b^1, b^2, \dots b^n $ be the births as they occur in order following along the braid $B$, as in Figure \ref{fig:EquivalentEmbedding}; note that we are slightly abusing notation here, as we are identifying the births with the Morse critical points. Next, let $b^1_{0}, b^2_{0}, \dots, b^n_{0} $ respectively $b^1_{l} , b^2_{l}, \dots, b^n_{l}$, where the subscript indicates if the birth occurs in $0$ or $l$-homology, {be the corresponding ordered births of $\M$}.
Note that we always have $n$ births unless $l=1$, when instead there are $n+1$ births. Again we emphasize that the births are ordered as they occur following $B$ starting with the first birth, not in order of birth time according to the distance filtration;
See Figure \ref{fig:EquivalentEmbedding}. Similarly, let $D^{I}, D^{II}, \dots, D^{N}$  be the deaths as they occur in order if we consider $B$, and $D^{I}_0 , D^{II}_0, \dots, D^{N}_0$ respectively $ D^{I}_l , D^{II}_l, \dots, D^{N}_l$ be the deaths if we consider them on the offset of $B$, $\M$. 
Again we assume that the deaths are ordered as they occur following $B$, not in order of deaths time. 

We will assume without loss of generality that $b^1$ is the lowest birth value. Because of the elder rule, the cycle created at $b^1$ dies (in extended persistence) at the maximal death value, that is $\max_J D^J$, for $B$. Similarly $b^1_{0}, b^1_{l}$ respectively die at $\max_J D^J_0$,  $\max_J D^J_l$ respectively for $\M$. 

The other persistence points are less straightforward, as they  follow the mergers of the sublevel sets of the equivalent embedding of Corollary \ref{Cor:Equivalent}; see Figure \ref{fig:EquivalentEmbedding}. However, the most important case for us will be the following: 
Consider first any ordinary $H_0$ points, so we have $b^1, b^j,b^k$ with $1 \neq j < k \neq 1$. Assume that $b^1$ is the earliest birth. Suppose that the death times are all larger than all birth times 
and are ordered as follows
\begin{align} 
\max \{D^{I}, D^{II}, \dots, D^{J-2}, D^{K+1}, D^{K+2} , \dots D^N\}  &< \min \{ D^{J-1},  D^{K}\} 
\nonumber
\\ 
 \max \{ D^{J-1} ,D^{K}\} &< \min\{ D^{J}, \dots  , D^{K-1}\} ,
\label{eq:DeathConditions}
\end{align} 
then the connected component born at time $b^j$ merges with the connected component born at time $b^1$ at time $D^{J-1}$ and the connected component born at time $b^k$ merges with the connected component born at time $b^1$ at time $D^{K}$ for $B$. 
In other words, under these conditions, there are points $(b^j,D^{J-1})$ and $(b^k , D^K)$ in the persistence diagram. 

{We now consider the persistence diagram for the $l$-homology %
of the manifold $\M$.  
In general, as in Figure \ref{fig:torus}, the births in $l$-homology follow the births in $0$-homology closely. By this we mean the following: As before we write $B$ for the braid and $\M$ for its $(l+1)$-offset, and we compare the same Morse function (distance to a point, or height if we consider the equivalent embedding) on these two spaces ($B$ and $\M$, respectively). 
The birth of $0$-cycles ($b^1,b^2,b^3$ in Figure \ref{fig:EquivalentEmbedding}) on the braid $B$ are close both geometrically and with regard to the value of the Morse function (the distance to a point or the height function, where in the latter case we consider the equivalent embedding) to the critical points that give rise to the birth of $0$-cycles and $l$-cycles on $\M$.  
There is also a simple correspondence with one exception between deaths of $0$-cycles for the braid and deaths of $0$- and $l$-cycles on $\M$, meaning that the critical points that correspond to deaths for $B$ are close to a pair of critical points (one maximum and one saddle) that correspond to deaths in $0$- and $l$-homology.  
The exception is the last death of a $0$-cycle in extended persistence on $B$; here we instead have a Morse critical point which corresponds to the birth of a $1$-cycle for the braid and which lives forever in the non-extended persistence, but corresponds to a death of a $0$-cycle in extended persistence. There are again two corresponding nearby critical points on $\M$ for this final critical point on $B$, however in this case the saddle corresponds to the birth of a $1$-cycle (which corresponds to $\mathbb{S}^1$ in $\mathbb{S}^1 \times \mathbb{S}^l$). This means that there is an extra point in the $l$-persistence diagram if $l=1$. Most importantly this point can be distinguished by the fact that its birth time is much higher than all other points in the persistence diagram.  
However, because the births and deaths of $B$ and $\M$ are so intimately linked except for the final death, we have the following: 
Consider $b^1_l, b^j_l,b^k_l$ and  $1 \neq j < k \neq 1$. Assume that $b^1$ is the earliest birth. If $l =1$ further assume that $b^m_{l=1}$ is the final birth and $j \leq m \leq k-1 $.
If now moreover the death times are ordered as follows
\begin{align} 
\max \{D^{I}_l, D^{II}_l, \dots, D^{J-2}_l, D^{K+1}_l, D^{K+2}_l , \dots D^N_l\}  &< \min \{ D^{J-1}_l,  D^{K}_l\} 
\nonumber
\\ 
 \max \{ D^{J-1}_l ,D^{K}_l \} &< \min\{ D^{J}_l, \dots  , D^{K-1}_l\} ,
\label{eq:DeathConditionsHl}
\end{align} 
then the $l$-cycle born at time $b^j_l$ merges with the $l$-cycle  born at time $b^1_l$ at time $D^{J-1}_l$ and the $l$-cycle born at time $b^k_l$ merges with the $l$-cycle  born at time $b^1_l$ at time $D^{K}_l$ for $\M$. 
In other words, under these conditions, there are points $(b^j_l,D^{J-1}_l)$ and $(b^k_l , D^K_l)$ in the persistence diagram. 
}

\begin{itemize} 
\item \textbf{STEP 4}
As discussed in the intermezzo the first birth is always coupled to the last death in the persistence diagram. We write $b^{j,c}_k(t)$ and $D^{J,c}_k(t)$ for the births, deaths respectively of $d( \cdot , \gamma (t) )|_{\M}$ per connected component $c$, where we stick to the convention that $b^{1,c}_k(t)$ is the first birth (in $k$-homology) for each connected component $c$. We use similar notation for $B$. Here, we note that although this distance value is continuous, the Morse critical point where this minimum is attained at is not continuous.  
A similar effect was called a Faustian interchange in \cite{ElizabethsThesis}. 
With this notation we can conclude that our first observation implies that for each $c$ there is a point $( b^{1,c}_k(t), \max_J  D^{J,c}_k(t))$ in the vineyard at level $t$ and the vine consisting of these points if closed (by identifying the vineyard at times $0$ and $2\pi$) will yield a circle for each $c$. 
To put it differently this will lead to a surgery as depicted in Figure \ref{fig:braid_to_ouroboros}. 
Finally, we also note that if $l=1$ there is an additional $1$-cycle, that is born much later than all the other $1$-cycles.

\begin{figure}[h!]
    \centering
    \includegraphics[width=0.75\linewidth]{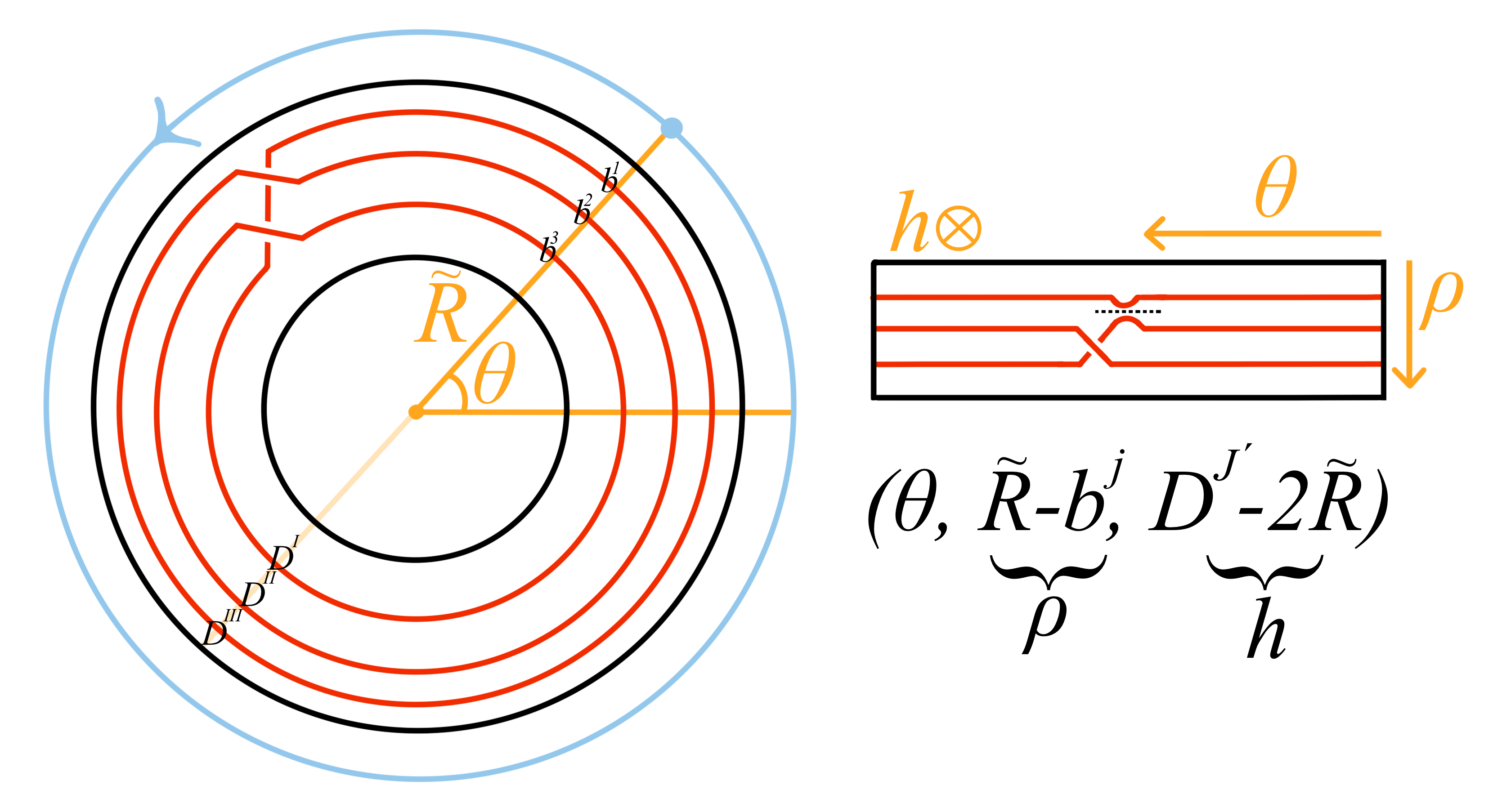}
    \caption{Simplified sketch: On the left, we see the particular closed braid that we called the Ouroboros. On the right, we see the $\theta$ and $\rho$ coordinates in the vineyard, where the $h$ coordinate is in the direction orthogonal to the plane and $\tilde{R}$ is the distance from the observation loop to the origin in the annulus. Here we identify $\rho$ with $\tilde{R}-b^j$ and $h$ with $D^{J'}-2\tilde{R}$, where $D^{J'}$ is the death time of the cycle born at $b^j$, corresponding to a strand as indicated in the figure.  
    The elder rule induces surgery indicated with a dashed line (as depicted on the right). Note that this sketch is simplified in that the annulus here is depicted in the plane, while the actual embedded is twisted as described previously.  %
    }
   \label{fig:braid_to_ouroboros}
\end{figure}

We further note that because vines are continuous (and even Lipschitz)  thanks to \cite{cohen2005stability}, the only thing which we need to worry about is the crossings, because if all the birth values are distinct the birth values give precisely the $\rho$ coordinates in Figure \ref{fig:braid_to_ouroboros}, death times corresponding to the coordinate $h$ do not matter.  

We recall that there is only one crossing per (angular) interval $[ \pi \frac{\tilde{j}}{8 (C+1)} , \pi \frac{\tilde{j}+1}{8 (C+1)}  ]$ where $\tilde{j} \in \{  8, 16, \dots , 8 C\}$ and no crossings in $[\pi+\frac{\pi}{8 (C+1)}, 2 \pi -\frac{\pi}{8 (C+1)}]$. Because the death times only matter for the crossing we can change the death times between crossings without creating topological problems. For a crossing in the interval $[ \pi \frac{\tilde{j}}{8 (C+1)} , \pi \frac{\tilde{j}+1}{8 (C+1)}  ]$ we dictate the death times by the geometry of the strands in the angular interval $[ \pi \frac{\tilde{j}-1}{8 (C+1)} +\pi , \pi \frac{\tilde{j}+2}{8 (C+1)} +\pi ]$. We do so by changing the $\tilde{\psi}$ coordinates of the strands in the interval, so that the death times for the interval $[ \pi \frac{\tilde{j}}{8 (C+1)} , \pi \frac{\tilde{j}+1}{8 (C+1)}  ]$ change, but the birth times in the interval  $[\pi+\frac{\pi}{8 (C+1)}, 2 \pi -\frac{\pi}{8 (C+1)}]$ remain the same, see Figure \ref{fig:psiManipulation}.  This in particular ensures that we do not introduce (extra) crossings in the vineyard that are not present in the closed braid we start with.

\begin{figure}[h!]
    \centering
    \includegraphics[width=0.75\linewidth]{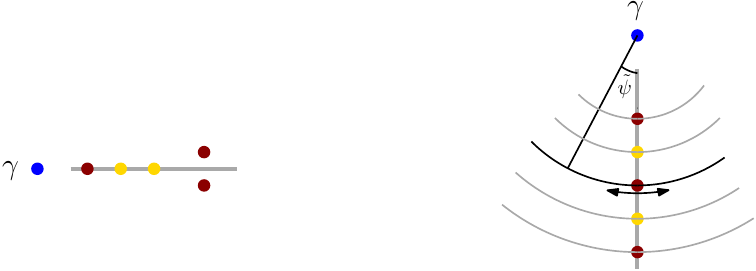}
    \caption{Manipulating the $\tilde{\psi}$ changes the Death times (vertical section of Figure \ref{fig:mainResult1}, not to scale). %
    }
    \label{fig:psiManipulation}
\end{figure}

Between these intervals we interpolate between the different geometries of strands (again by changing the $\tilde{\psi}$ coordinates), which we can do because as mentioned the death times do not influence the topology of the closure of the braid appearing in the persistence diagram.

Now let us consider a crossing of two strands in the interval $[ \pi \frac{\tilde{j}}{8 (C+1)} , \pi \frac{\tilde{j}+1}{8 (C+1)}  ]$ that are both not born first. 
As in intermezzo we denote with a little bit of abuse of notation by $b^{j,c}(t)$ and $b^{k,\tilde{c}}(t)$ (for $B$, or $b^{j,c}_0(t)$, $b^{k,\tilde{c}}_0(t)$,$b^{j,\tilde{c}}_l(t)$ and $b^{k,c}_l(t)$ respectively in the case of $\M$) both the birth times for these strands as well as the Morse critical points, where $c,\tilde{c}$ denote different connected components. We further stress that as in the intermezzo the $j$ and $k$ indices in $b^{j,c}(t)$ and $b^{k,\tilde{c}}(t)$ refer to order along the closed braid not the order of insertion. 
In the case where $l=1$ we further assume that neither $j$ nor $k$ corresponds to the birth with the very high birth value, that is the point where in non-extended persistence the second $1$-cycle that lives forever is born. 

We distinguish two different cases, one where $c = \tilde{c}$ and one where $c \neq \tilde{c}$. We start with the latter: We'll focus on the (somewhat simpler) $B$ case, as the $\M$ case is virtually identical. Because the birth times correspond to Morse critical points on different connected components changing order does not influence the pairing between Morse critical points and birth and death in the persistence diagrams. This means that as long as the death times are distinct in  $[ \pi \frac{\tilde{j}}{8 (C+1)} , \pi \frac{\tilde{j}+1}{8 (C+1)}  ]$ and the cycle born at $b^{j,c}(t)$ dies before the one born at $b^{k,\tilde{c}}(t)$ for one $t$ in this interval, then the cycle born at $b^{j,c}(t)$ dies before the one born at $b^{k,\tilde{c}}(t)$ for all $t$ in this interval and at continuous and distinct death times. Let us write $D^{J',c} (t)$ and $D^{K',\tilde{c}}(t)$ for the death times.  
This means that the persistence diagram (at level $t$ in the vineyard) contains the points $(b^{j,c}(t) ,D^{J',c} (t))$ and $(b^{k,\tilde{c}}(t) ,D^{K',\tilde{c}}(t))$, each of which locally describe a vine. This in turn implies that if (locally) $D^{J',c} (t)'> D^{K',\tilde{c}}(t)$  then the vine $( b^{j,c}(t) ,D^{J',c} (t))$ in the vineyard crosses under the vine $(b^{k,\tilde{c}}(t) ,D^{K',\tilde{c}}(t))$ and the reversed order of death corresponds to an under crossing. This is achieved (as mentioned) by manipulating the $\tilde{\psi}$ coordinate. An easy way to achieve this if the component $c$ passes under $\tilde{c}$ is to push all strands of $c$ in and all strands of $\tilde{c}$ out and the reverse for an over pass.  If one would like to repeat this discussion for $\M$ one only need to add a lower order $0$ or $l$ respectively. 

 We note that in the case where $l=1$, the birth with the very high birth value, that is the point where in non-extended persistence the second $1$-cycle that lives forever is born and was denoted by $m$ in the intermezzo, is separate (because of the high birth value) and therefore gives a disconnected circle in the closure of the vineyard.  

Now we consider the case there $c=\tilde{c}$. To simplify notation we will drop the index $c$ from the notation altogether. We again focus on the $B$ case, the case $\M$ is almost identical. 
If the death times satisfy \eqref{eq:DeathConditions} for each $t$ in  $[ \pi \frac{\tilde{j}}{8 (C+1)} , \pi \frac{\tilde{j}+1}{8 (C+1)}  ]$, then there are points $(b^j (t),D^{J-1}(t))$ and $(b^k(t) , D^K(t) )$ in the persistence diagram. We can ensure that \eqref{eq:DeathConditions} holds by changing the $\tilde{\psi}$ coordinates as before, see Figure \ref{fig:psiManipulation}.
Because the assumption \eqref{eq:DeathConditions} does not constrain the relative order of $b^j (t)$ and $b^k (t)$ nor the relative order of $D^{J-1}(t)$ and $D^{K}(t)$. This means that we can change the order of birth time (which occurs thanks to the crossing in the closed braid $B$) during the course of the interval $[ \pi \frac{\tilde{j}}{8 (C+1)} , \pi \frac{\tilde{j}+1}{8 (C+1)} ]$ and we can fix the relative order of $D^{J-1}(t)$ and $D^{K}(t)$ as needed (which we can again do by manipulating $\tilde{\psi}$, see Figure \ref{fig:psiManipulation}). If $D^{J-1}(t)>D^{K}(t)$ in  $[ \pi \frac{\tilde{j}}{8 (C+1)} , \pi \frac{\tilde{j}+1}{8 (C+1)}  ]$ then the $b_j$ vine crosses over the $b_k$ vine, while for the reversed order the $b_j$ vine crosses crosses under the $b_k$ vine. We refer to the appendix for an extensive example of this procedure in the case of the braid depicted in Figure \ref{fig:mainResult1}. 
This means that regardless of if we have an over or under crossing we locally push the strands with the critical points $D^{I}, D^{II}, \dots, D^{J-2}, D^{K+1}, D^{K+2} , \dots D^N$ in (towards the centre of the annulus) by a lot, and the strands with critical point $D^{J}, \dots  , D^{K-1}$ out (from the centre of the annulus) by a lot. This leaves the strands with $D^{J-1}$ and $D^{K}$ in the middle, and one may push the strand with $D^{J-1}$ a little out and the $D^{K}$ a little in if we want the $b^j$ vine segment to pass over the $b^k$ part of the vine, with the reverse pushing for the under crossing.    
This in particular shows that the braid $B$ can be faithfully reconstructed in the persistence diagram, although we also introduce some one extra loop per connected component of the link if $l\neq 1$ and two if $l=1$.

\end{itemize} 
The only thing left to remark is that the conditions of Lemma \ref{Lem:Step2} and Corollary \ref{Lem:Step4} can always be satisfied thanks to Remark \ref{rem:ConditionsAreEasyToSatisfy}. %
\end{proof}

\subsection{Vineyard monodromy} 

In this section we give the second main result of this paper, namely that every order of monodromy can be found in a vineyard and in any dimension $l$ of $H_l$, with $l\leq d-2$. This result follows almost immediately from Theorem \ref{thm:Link}, and answers one of the main open questions of Arya et al. \cite{Arya2024}, albeit in a slightly different context.

\thmMonodromy*

\begin{proof} The result follows by applying Theorem \ref{thm:Link} to the ($l$-offset) of the ouroborus knot, that is, the closure of the braid with $k+1$ strands {and $k$ over crossings,} as depicted in Figure \ref{fig:braid_to_ouroboros}. 
\end{proof}

\section{Conclusions and open questions}

We conclude by reflecting that the construction in this paper implies that vineyards are as topologically complex as one may hope or fear, opening up a number of interesting directions for future work.  We note that as stated in~\cite{onus2024shovingtubesshapesgives}, the radial transform presents some significant advantages over the more well-studied directional transform, and we feel it warrants greater study.  In particular, it is quite critical in our construction, and allowed for better understanding of monodromy and braiding in higher dimensions.

Theorem \ref{thm:Link} implies that comparing closed vineyards with a measure that completely reflects the topology is likely to be a difficult problem, computationally speaking. By this we mean the following: Suppose that we want to compare periodic phenomena using persistence, and thus are faced with closed vineyards. Now suppose that we want a distance between two closed vineyards that respects the topology. As these vineyards can contain arbitrary knots and links embedded in them, solving this problem would necessarily mean recognizing those structures within.
While the exact complexity of knot recognition is formally still open, in practice it has proven to be quite difficult, as it sits at the intersection of NP~\cite{Ichihara2023,hass1997algorithms} and co-NP~\cite{Lackenby2021} and is connected to several known hardness results~\cite{Koenig2021}.  In particular, it is known to be NP-hard to test unknottedness if the number of simplifying Reidemeister moves allowed while unknotting is specified as a part of the input~\cite{deMesmay2021}.

That said, the richness of structure present in vineyards may still allow for subtle distinctions and invariants to be calculated.  In fact, our results in some sense show that vineyards are more than the sum of parts, as the choice of base loop in our example is in fact critical for the resulting vineyard to have any braiding, and the resulting vineyard can exhibit quite complex behavior.
However, these subtleties come at a high computational cost, and finding a compromise between topological fidelity and computational complexity will be an important challenge for future work.

We conclude with a few specific open questions:
\begin{itemize}
\item[-] Are there good combinations of link invariants and distances, like the Wasserstein distance, that still capture a lot of the topology, provide a practical similarity measure, and are relatively easy to compute?  If so, could these invariants prove useful in topological data analysis, given the rising use of the persistent homology transform in real world applications?
\item[-] Are there geometric conditions (that are easy to verify) on a pair of closed vineyards that imply that the closed vineyards are isomorphic?
\item[-] Could tools from topological data analysis, and in particular statistical work on vineyards~\cite{Munch2015}, provide new practical insights into knot and link recognition?
\item[-] On the application side, it would be interesting to find even simple non-trivial knots or links (like the trefoil knot, Hopf link, or  Borromean rings) in persistence diagrams of real data, as there is some indication of knots present in proteins~\cite{Letscher2012}. If we may add a little speculation, the authors would start looking in persistence diagrams of biological systems containing circular DNA \cite{neill2024enzymatic}.  
\end{itemize}

\section*{Acknowledgments}
We thank the reviewers of both SODA and ATMCS for their comments, which improved the exposition of our paper. 
We thank Kate Turner for discussion and Cl{\'e}ment Maria for pointing out that Alexander's theorem was already (well) known. 
Mathijs Wintraecken would like to express his gratitude to the administrative support he received from University of Notre Dame  during his visit and from Sophie Honnorat and Stephanie Verdonck at Inria in general. 

This work has been supported by the ANR grant StratMesh, ANR-24-CE48-1899, by NSF award 2444309, and the welcome
package from IDEX of the Université Côte d’Azur, ANR-15-IDEX-01.

\phantomsection
\addcontentsline{toc}{section}{Bibliography}
\bibliography{bib}

 \appendix

 \section{Example}
 \label{app:Example}

As an example of pushing outwards and inwards of the strands by manipulating the $\tilde{\psi}$ coordinates in Step 4 of the proof of Theorem \ref{thm:Link} we discuss the example shown in Figure \ref{fig:mainResult1} in detail. We number the crossings $c_1, \dots , c_{13}$ as indicated in Figure \ref{fig:mainResultCrossings}, see also Figure \ref{fig:pretty_trefcircle}. The manipulation proceeds as below. We emphasize that all of the death values associated to points near a crossing need to be generic, that is, there are no two identical death values. We will not repeat this for every crossing.

\begin{figure}[h!]
    \centering
    \includegraphics[width=0.5\linewidth]{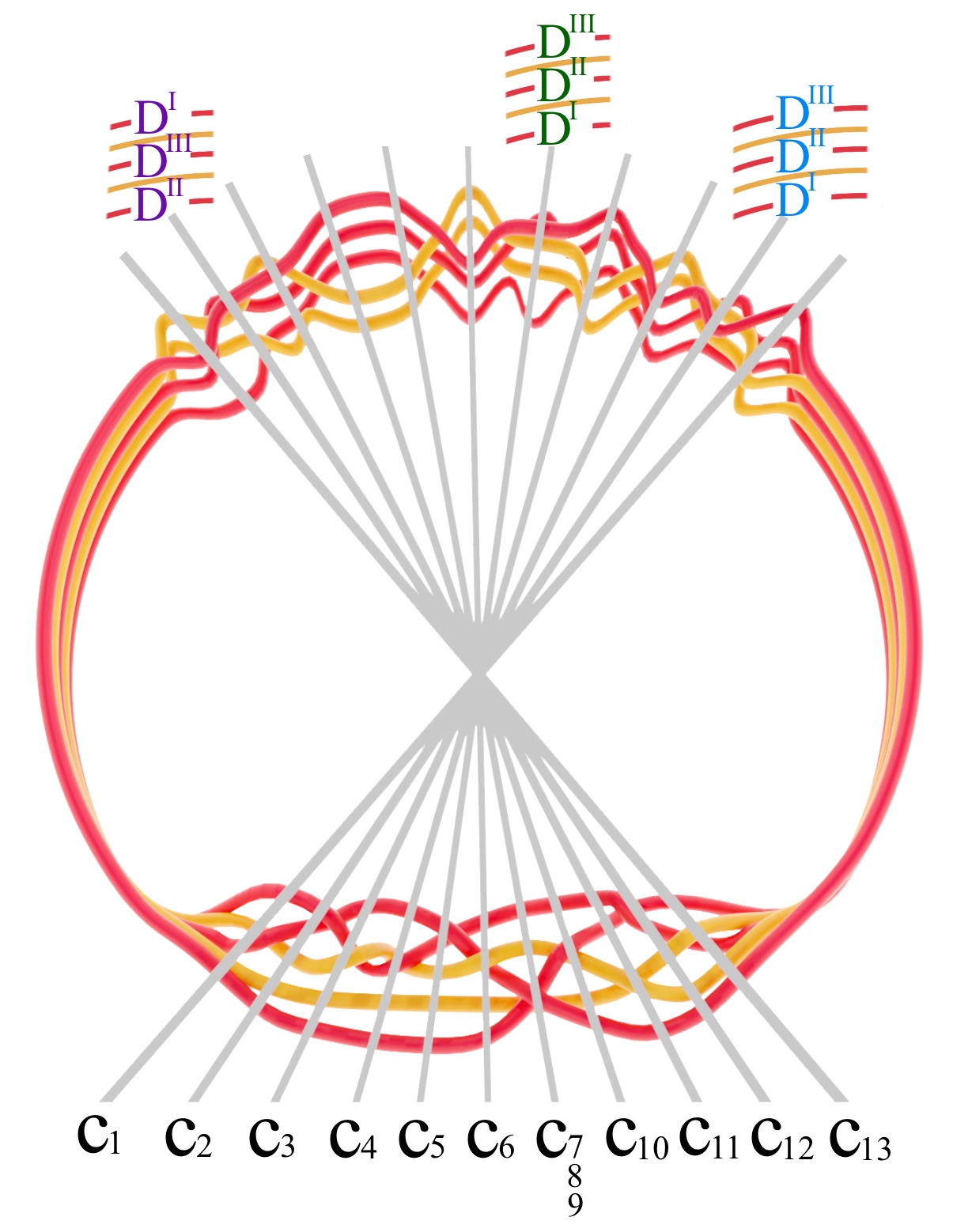}
    \caption{The crossings and the specific correspondence between the deaths or critical points and the strands indicated. \label{fig:mainResultCrossings} } 
\end{figure}

\begin{itemize} 
\item At $c_1$ yellow crosses under red, hence we push all red strands out and all yellow ones in. We will refer to this as yellow before red. 
\item At $c_2$ the strand on which $b^2$ lies crosses under the strand with $b^3$, in accordance with \eqref{eq:DeathConditions} we push $D^{III}$ in and $D^{II}$ out so that $D^{I}$ lies in between (blue in the figure), what you do with the yellow strands doesn't matter (as long as it is generic).  The way that we push is also indicated in blue in the figure.  
\item At $c_3$ red crosses under yellow, hence we push the yellow strands out and the red ones in, i.e. red before yellow.
\item At $c_4$ red crosses over yellow, hence push the red strands out and the yellow ones in, i.e. yellow before red. 
\item Similarly to the crossing at $c_2$ (but at $c_5$, the strand with $b^2$ crosses under the strand $b^3$), at $c_5$ we push $D^I$ in and push $D^{II}$ out so that $D^{III}$ (green in the figure) lies in the middle (there is no condition on the yellow strands except genericity). Indicated in green in the figure.  
\item At $c_6$ red goes before yellow.
\item  The crossings $c_7$, $c_8$, and $c_9$ almost coincide in the figure, however, the red crosses in all cases over yellow, so that yellow needs to go before red\footnote{Strictly speaking it is not necessary that yellow needs to go before red as long as you choose consistently for both crossings, because at the red crossing, the first birth changes from one strand to another, which leads to surgery so that the outer strand disconnects, which means that after a Reidemeister II move you are fine. However, to fit with the text in the proof, it is best to have yellow before red.}, because the first birth in red exchanges strand the coupling is automatic and leads to a disconnected component (i.e. surgery is performed).  
\item At $c_{10}$ yellow goes before red.
\item At $c_{11}$ the order of the strands really doesn't matter as long as all death values are distinct (generic), because this is another first birth interchange (on yellow this time). 
\item The crossing at $c_{12}$ is again similar to the crossings at $c_5$ with the strand with $b^2$ crossing under $b^3$. We stress that because we label the critical points along the braid in order along the braid from the first birth, and the first birth has exchanged strand, the strand that contains $b^2$ is not the same strand as the one that contained $b^2$ at $c_5$ (where by the same we mean identification via shortest paths on the link). The strands that are associated to $D^{I}$, $D^{II}$, and $D^{III}$ change as a consequence as well, see Figure \ref{fig:mainResultCrossings} (purple). This having been said, we follow the same procedure as at $c_{5}$:
At $c_{12}$ push $D^{III}$ out and $D^{II}$ in, so that $D^{I}$ ends up in the middle (there is no condition on the yellow strands except genericity). Indicated in purple in the figure. 
\item At $c_{13}$ red goes before yellow. 
\end{itemize}

\begin{figure}

\centering
\includegraphics[width=.85\textwidth]{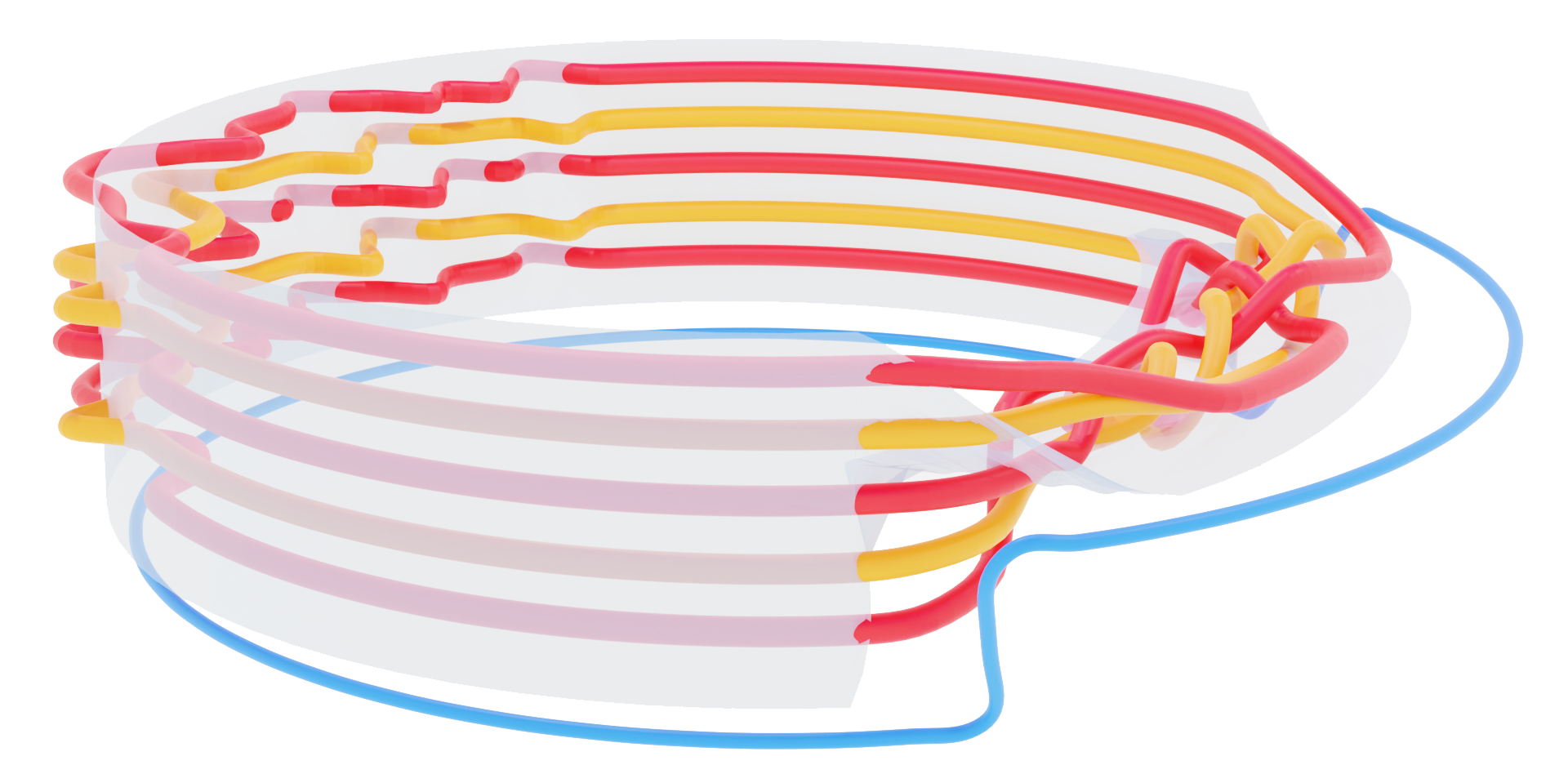}
\includegraphics[width=.85\textwidth]{figures/appendix_front_view_2.png}
\caption{Different views of the Example. Top: together with observation loop (blue) and twisted annulus (transparent gray). Bottom: front view of just the manipulated trefoil-circle link. \label{fig:pretty_trefcircle}}
\end{figure}

\section{Monodromy touching the diagonal} 
\label{appendix:diagonal}

If we drop the assumption that the vines do not touch the diagonal, but still assume that none of the persistence diagrams contain points with higher multiplicity, it is possible to formulate a notion of monodromy that is more equivalent to that contained in prior work~\cite{Arya2024}. 
To make sense of this definition given below and to provide an alternative formulation to the one in our main paper, we need the following observation: Let $\tilde{\mathcal{V}}_{\max} (t)$ be a vine defined on its  maximal domain $(t_{\min},t_{\max})$, where $t_{\min},t_{\max} \in \mathbb{R} \cup \{ \pm \infty\}$, that contains the point $\mathcal{V} (0)$. 
More formally $\tilde{\mathcal{V}}_{\max} (t)$ is the restriction to the maximal open interval such that the lift of the curve $\gamma: \mathbb{R} \to \mathbb{R} / 2 \pi \mathbb{Z}: t \mapsto t \; \textrm{mod} \; 2 \pi$, contains no limit points on the diagonal for this open interval. 

If we are given a vine $\mathcal{V}(t)$ defined on $[0,2 \pi]$ we call the vine  $\tilde{\mathcal{V}}_{\max} (t)$ defined on its maximal domain that coincides with $\mathcal{V}(t)$ on the interval $[0,2 \pi]$, the maximal extension  of $\mathcal{V}(t)$. If the domain of a maximal extension is finite, then we say that the vine has a finite maximal extension. 

If the number of points in each persistence diagram is finite (as we assume), then either both $t_{\min}$ and $t_{\max}$ are (plus/minus) infinite or neither of them are. This is clear because if one of them is infinite we must return to the same point in the persistence diagram after some time $2 \pi k$ (since the number of points in the persistence diagram is finite), in which case $\tilde{\mathcal{V}} (t)$  is periodic and is defined on  $(-\infty ,\infty)$. 

Using this definition we have the following reformulation of the order of the monodromy in the setting of Definition~\ref{Def:Monodromy1}:
\begin{remark}\label{rem:ReformulationMonodromy1}
Assume that there are no points of higher multiplicity in $\Per_l (d_\mathbb{E}(\cdot, \gamma(t) )|_\M)$, for all $\gamma(t)$ and by extension the vines are non-intersecting curves. Assume moreover that all of the vines are disjoint from the diagonal.
The vine $\mathcal{V} (t)$  exhibits monodromy of order $k$ if and only if $k$ is the smallest positive integer such that $\tilde{\mathcal{V}}_{\max} (0)=\tilde{\mathcal{V}}_{\max} (2 \pi k)$, where $\tilde{\mathcal{V}}_{\max} (t)$  is the maximal extension of $\mathcal{V} (t)$.   
\end{remark}

In this setting, that is, where we allow vines to contain points in the diagonal (as limit points), 
the goal again is to glue together in a way that we have the exhibited periodicity of monodromy as an integer.
More formally:

\begin{definition}
Assume that there are no points of multiplicity higher than $1$ in $\Per_l (d_\mathbb{E}(\cdot, \gamma(t) )|_\M)$, for all $\gamma(t)$ and by extension the vines are non-intersecting curves. We'll follow the same notation as above. 
\newline 
\textbf{Individual vine monodromy}
Let $\mathcal{V} (t)$ be the vine and $\tilde{\mathcal{V}}_{\max} (t)$ its maximal extension. If its maximal domain is $(-\infty ,\infty)$, then the vine does not touch the diagonal and Definition \ref{Def:Monodromy1} and the reformulation in Remark \ref{rem:ReformulationMonodromy1} both apply. 
Let us now assume that $t_{\min},t_{\max}$ are finite. By reparametrizing we can assume (without loss of generality) that $t_{\min}=0$. The order of monodromy is now defined as the smallest positive integer $k$, such that $t_{\max} <2 \pi k$. As before, we'll say that the monodromy is trivial if $k=1$. 

We'll also define the completion along the diagonal of a vine $\mathcal{V}$. Let $\tilde{\mathcal{V}}_{\max}: (t_{\min},t_{\max}) \to \Per_l (d_\mathbb{E}(\cdot, \gamma(t) )|_\M)$ be the maximal extension of $\mathcal{V}$, and write 
\begin{align} 
v_{\min} = \lim_{t \searrow t_{\min} } \tilde{\mathcal{V}}_{\max} (t) 
\nonumber
\\
v_{\max} = \lim_{t \nearrow t_{\max} } \tilde{\mathcal{V}}_{\max} (t) 
\nonumber 
\end{align} 
for the two limit points of the vine on the diagonal. 
Letting $k$ be the order of monodromy of $\mathcal{V}$, we define the completing diagonal vine $\mathcal{V}_D$ as 
\begin{align} 
\mathcal{V}_D :& (t_{\max} , t_{\min} + 2 \pi k) \to  \Per_l (d_\mathbb{E}(\cdot, \gamma(t) )|_\M)
\nonumber 
\\
& t \mapsto 
\left (t ,v_{\max} \left (1 -\frac{t-t_{\max}}{t_{\min} - t_{\max} + 2 \pi k} \right) + v_{\min} \frac{t-t_{\max}}{t_{\min} - t_{\max} + 2 \pi k}   \right) .
\nonumber 
\end{align} 
The start and end points of the concatenation $\mathcal{V}_D \circ \tilde{\mathcal{V}}_{\max}$ coincide and therefore by identifying $t_{\min}$ and $t_{\min} + 2 \pi k$ we can consider this to be a map on $\mathbb{R} / 2 \pi k \mathbb{Z}$. The set $\mathbb{R} / 2 \pi k \mathbb{Z}$  can be viewed as a $k$ cover of the circle $\mathbb{R} / 2 \pi \mathbb{Z}$, where we identify $\mathbb{R} / 2 \pi \mathbb{Z}$ with the loop $\gamma$. Composing with this cover map gives a map $\mathcal{V}_C : \gamma(t) \mapsto  \Per_l (d_\mathbb{E}(\cdot, \gamma(t) )|_\M)$, which exhibits monodromy in the way we defined above. We call $\mathcal{V}_C$ the completion of the vine $\mathcal{V}$.  
\newline
\textbf{Vineyard monodromy}
We now call the vines whose maximal extension do not intersect the diagonal (or equivalently those whose maximal domain is $(-\infty, \infty)$) non-rooted vines, while we call those that intersect the diagonal (or equivalently those whose maximal domain has finite length) rooted vines.  
We write $k_1, \dots , k_n$ for the orders of monodromy of non-rooted vines and write $l_1 ,\dots, l_m$ for the orders of monodromy of the rooted vines. Let $l_{\max} = \max \{ l_1 ,\dots, l_m\}$, then we define the order of monodromy of the vineyard as the smallest $k$ that is a common multiple of $k_1, \dots , k_n$ and is larger than $l_{\max}$. We also call $k$ the common order of monodromy of all the vines.  

We can complete the vineyard in the same way as before, but with $k$ the common order of monodromy of all the vines, instead of an individual vine. We call the result the completed vineyard. 
\end{definition} 

We stress that there may be self-intersections on the diagonal of the completed vineyard. We note again here that if the vineyards are non-generic, the situation is significantly more complex, as vines can intersect~\cite{turner2023representing}; we will not consider the non-generic case any further in this work, but it may be of interest for future study.

\section{Omitted Proofs}
\label{appendix:proofs}

We now present more details on the technical proofs that were omitted in Section~\ref{sec:BraidsAndMorse}.

\subsection{Proof of Lemma~\ref{Lem:Step2}}

The proof of this lemma depends on one of the Morse theorems, see e.g. \cite[Theorem~3.1]{milnor1963morse}:
\begin{theorem}[Morse]\label{th:Morse} 
Suppose $f$ is a smooth real-valued function on $\M$, $a < b$, $f^{-1} [ a , b ]$ is compact, and there are no critical values between $a$ and $b$. Then $\M^a$ is diffeomorphic to $\M^{b}$, and $\M^{b}$ deformation retracts onto $\M^{a}$. 
\end{theorem}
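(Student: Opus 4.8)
The plan is to prove this by the classical gradient-flow argument of Milnor, which I would carry out in three movements: build a controlled flow, use it to produce the diffeomorphism, and then modify it into a deformation retraction.

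First, I would equip $\M$ with a Riemannian metric and form the gradient $\nabla f$, characterized by $\scalprod{\nabla f}{v} = df(v)$ for all tangent vectors $v$. By hypothesis there are no critical values in $[a,b]$, so $\nabla f$ is nowhere zero on the compact set $f^{-1}[a,b]$, hence $|\nabla f|$ is bounded below there and $\nabla f$ is nonvanishing on some open neighborhood $U \supseteq f^{-1}[a,b]$. I would then choose a smooth cutoff $\rho : \M \to [0,\infty)$ that equals $1/|\nabla f|^2$ on $f^{-1}[a,b]$ and has compact support contained in $U$, and set $X := \rho\,\nabla f$. Because $X$ is compactly supported it is complete, so it generates a global flow $\varphi : \R \times \M \to \M$. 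The one essential computation is that along any flow line $\frac{d}{ds} f(\varphi_s(q)) = df(X) = \rho\,|\nabla f|^2$, which equals $1$ precisely on $f^{-1}[a,b]$; thus on that band $f$ increases at unit speed along the flow.

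Second, from this unit-speed increase I would read off the diffeomorphism. The time-$(b-a)$ map $\varphi_{b-a}$ carries the level $f^{-1}(a)$ to $f^{-1}(b)$ and, more generally, maps $\M^a$ diffeomorphically onto $\M^b$, with inverse $\varphi_{-(b-a)}$ — here I take $\M^a,\M^b$ to be the sublevel sets, as in the classical statement, so that the larger set $\M^b$ retracts onto the smaller $\M^a$. Compactness of $f^{-1}[a,b]$ is precisely what guarantees the flow exists for the full time interval without escaping, so this map is well defined. Third, for the deformation retraction I would flow points downward only as far as needed: for $q \in \M^b$ put $c(q) := \max\{0,\, f(q)-a\}$ and define $r : [0,1]\times \M^b \to \M^b$ by $r_s(q) := \varphi_{-s\,c(q)}(q)$. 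Then $r_0 = \mathrm{id}$; every $q$ with $f(q)\le a$ has $c(q)=0$ and is fixed for all $s$; and $r_1$ pushes each $q$ with $f(q)\in[a,b]$ down to level exactly $a$ (the flow stays in the band throughout, since $f$ decreases monotonically from $f(q)$ to $a$), while fixing $\M^a$ pointwise. Hence $r_1(\M^b)=\M^a$ and $r_1|_{\M^a}=\mathrm{id}$, exhibiting $\M^a$ as a strong deformation retract of $\M^b$.

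The main obstacle is not any single computation but the care needed to make the flow \emph{global and well behaved}. One must insert the cutoff $\rho$ so that $X$ is complete: the bare normalized gradient $\nabla f/|\nabla f|^2$ need not generate a complete flow and could blow up as $|\nabla f| \to 0$ near critical points outside the band. One must then check that the stopping rule $c(q)$ yields a jointly continuous homotopy that matches up continuously across the level $f=a$, where the flow time is switched off (the function $c$ is only Lipschitz there, but continuity of $r$ is all that a deformation retraction requires). Throughout, compactness of $f^{-1}[a,b]$ is what makes everything work, ensuring both the lower bound on $|\nabla f|$ and that the flow does not leave the relevant region in the allotted time.
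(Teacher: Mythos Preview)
The paper does not provide its own proof of this theorem; it merely quotes the statement with a citation to Milnor's \emph{Morse Theory}, Theorem~3.1. Your argument is precisely Milnor's gradient-flow proof and is correct, including the care you take with the cutoff to ensure completeness of the vector field and with the stopping-time function $c(q)$ to obtain a continuous strong deformation retraction.
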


\begin{proof}[of Lemma \ref{Lem:Step2}] 
We write $\beta (\theta)$ for a parametrization of $B$ according to the angle of the circle $C_h(0,R)$. We parametrize the circle as $s(\theta) =( R \sin \theta,R\cos \theta, 0 , \dots ,0 )$. Using this notation, we note that the gradient of $d(y, p)_\M$ is zero at $y=x$ if for its closest point on $B$, that is, $\beta(\theta)$, we have that 
$ \langle {\beta}',\beta- p  \rangle =0 $, where $\beta'$ denotes the derivative of $\beta$ with respect to $\theta$. This is equivalent to 
\[ \angle  {\beta}',\beta- p  = \pi/2.
\]
By assumption we have that 
\[  \angle  {\beta}' , s' \leq \theta_B.
\]
Moreover, because $| \beta(\theta) - s(\theta) | \leq \epsilon $, we also have 
\[ \sin (\angle \beta -p , s- p)   \leq  \frac{\epsilon} { |s -p| }, 
\]
as can be seen from Figure \ref{fig:Obvious}. 

\begin{figure}[h!]
    \centering
    \includegraphics[width=0.50\linewidth]{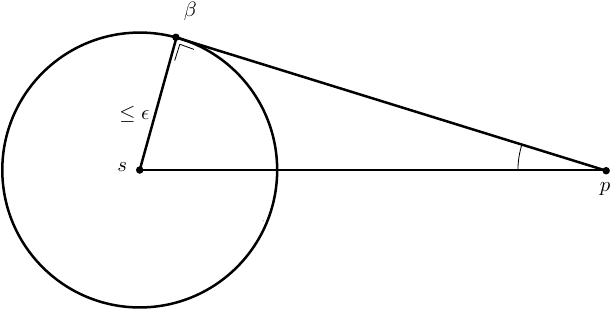}
    \caption{The angle estimate for $\angle \beta -p , s- p $. }
    \label{fig:Obvious}
\end{figure}

Using the triangle inequality of angles (or points on the sphere) we find that 
\begin{align} 
|\angle  ({\beta}',\beta- p)  -  \angle  ({s}',s- p) | \leq  \theta_B + \arcsin \frac{\epsilon} { |s -p| } . 
\label{eg:TriangleIneq1}
\end{align} 

Let us now write $d= | s- p|$, $p'$ for the closest point projection of $p$ on $C_h(0,R)$
and $\omega = \angle  ({s}',s- p')$. 
By reparameterization we can assume that $p'= s(0)$. With this assumption we see by the construction in Figure \ref{fig:AngleEst3} that  $\omega=\angle  ({s}',s- p') = \frac{\theta}{2}$. Moreover we have that $|p'- s | = 2 R \sin (\frac{\theta}{2}) $
\begin{figure}[h!]
    \centering
    \includegraphics[width=0.55\linewidth]{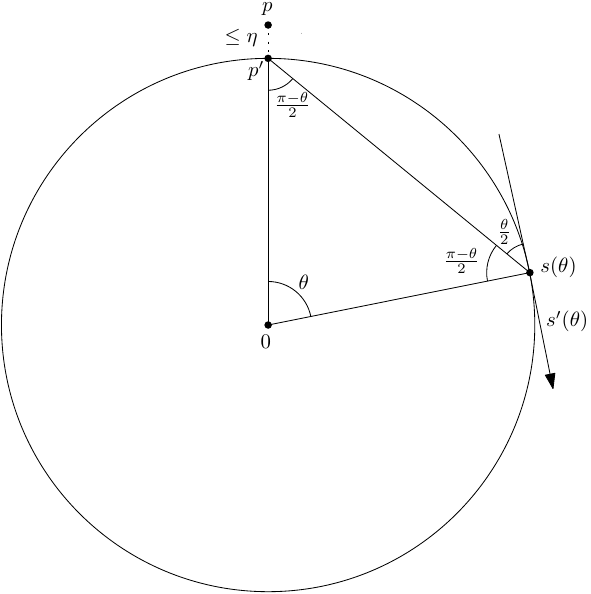}
    \caption{The construction for the angle $\omega= \angle  ({s}',s- p')$. }
    \label{fig:AngleEst3}
\end{figure}
By the same argument as that given in Figure \ref{fig:Obvious} we have  
\[  \sin \angle  (s-p',s- p)  \leq \frac{\eta}{|s-p'|},
\]
so that together with \eqref{eg:TriangleIneq1} we find that
\begin{align} 
\left |\angle  ({\beta}',\beta- p)  - \frac{\theta}{2}  \right|& \leq  \theta_B + \arcsin \frac{\epsilon} { |s -p| } +\arcsin \frac{\eta} { |s -p'| } 
\nonumber
\\
&\leq \theta_B + \arcsin \frac{\epsilon} { |s -p'| -\eta } +\arcsin \frac{\eta} { |s -p'| } 
\tag{by the triangle inequality} 
\\
&\leq \theta_B + \arcsin \frac{\epsilon} { 2 R \sin (\frac{\theta}{2})  -\eta } +\arcsin \frac{\eta} { 2 R \sin (\frac{\theta}{2})  } 
\nonumber
\end{align} 
This means that if 
\[ \frac{\theta}{2} + \theta_B + \arcsin \frac{\epsilon} { 2 R \sin (\frac{\theta}{2})  -\eta } +\arcsin \frac{\eta} { 2 R \sin (\frac{\theta}{2})  }  <\frac{\pi}{ 2} 
\] 
then  $\angle  ({\beta}',\beta- p) \neq \pi/2 $ and hence there are no critical points. The fact that there is no topological change follows from Theorem \ref{th:Morse}. 
\end{proof} 

\subsection{Proof of Lemma~\ref{Lem:Step3}}

\begin{proof}%
Because the conditions of Lemma \ref{Lem:Step2} are satisfied, we know that there are no Morse critical points unless $\theta \simeq 0$ or $\theta \simeq \pi$ so we focus on establishing that there is only one critical point per strand at $\theta \simeq 0$ or $\theta \simeq \pi$ for $B$ and or two per strand at $\theta \simeq 0$ or $\theta \simeq \pi$ in the case of $\M$.

The proof for the braid case is the difficult step, and we will see below that the statement for $\M$ follows immediately. The idea of the proof is the following: Because the closed braid is $\delta$-circular its parametrization $\beta$ satisfies $\left|\frac{\beta (t)}{R^2} +  \ddot{\beta} (t)  \right| <  \delta$. This means that $\beta$ is forced to turn inward towards the centre of $C_h(0,R)$, following the circle $C_h(0,R)$, where $0$ is the origin of Euclidean space. This in turn implies that $\beta$ cannot satisfy $\frac{d}{dt} | \beta - p|^2 (t) = 0 $ for two times $t$ that are relatively close. In other words, the curve cannot be tangent to some sphere (not necessarily of the same radius) centred at $p$. 

The way we establish that $\frac{d}{dt} | \beta - p|^2 (t) $ cannot be zero for two nearby values is by establishing that if $\frac{d}{dt} | \beta - p|^2 (t) =0 $  then  
\[
\frac{d}{dt} \left(\frac{d}{dt} | \beta - p|^2 \right) (t) = \frac{d^2}{dt^2} | \beta - p|^2  (t) 
\]
is large. 
If we write $\frac{\beta (t)}{R^2} +  \ddot{\beta} (t)= \Delta(t)$ and suppress $t$ from the notation, we see that 
\begin{align}
     \frac{d^2}{dt^2} | \beta - p|^2  (t) &= 2 \langle
     \ddot{\beta} , \beta -p \rangle  + 2 \langle  \dot{\beta} , \dot{ \beta} \rangle 
     \nonumber 
     \\ 
     &= 2 \langle
     \ddot{\beta} , \beta -p \rangle  + 2 \tag{because $|\dot{\beta} |=1$}
     \\
     &= 2 \left \langle - \frac{\beta}{R^2}  + \Delta , \beta -p \right \rangle  +2
     \nonumber
     \\
     &=   -2 \left \langle \frac{\beta}{R^2} , \beta \right \rangle  +2 \left \langle \Delta, \beta -p \right \rangle +2  +2 \left \langle \frac{\beta}{R^2} , p \right \rangle 
     \label{Eq:SecondDerDist}
\end{align}
We can now examine the first three terms in \eqref{Eq:SecondDerDist}: 
\begin{itemize} 
\item Because $B$ is $(\epsilon,R)$-embedded $R-\epsilon\leq  |\beta| \leq R+\epsilon$, so that 
\[ -2 \left (1+\frac{\epsilon}{R} \right )^2  \leq  -2 \left \langle \frac{\beta}{R^2} , \beta \right \rangle \leq -2 \left (1-\frac{\epsilon}{R} \right )^2,
\]
which, if $\epsilon \ll R $, simplifies to 
\[ -2 -\frac{6\epsilon}{R}  \leq  -2 \left \langle \frac{\beta}{R^2} , \beta \right \rangle \leq -2 +\frac{6\epsilon}{R} .
\]
\item Because $|\Delta(t)| = |\frac{\beta (t)}{R^2} +  \ddot{\beta} (t)| \leq \delta$, Cauchy-Schwarz yields that 
$ 2| \left \langle \Delta, \beta -p \right \rangle |\leq \delta |\beta- p| \leq \delta (R+\eta)$.  
\end{itemize}
This implies that these three terms are close to zero,  and hence %
\[
\frac{d^2}{dt^2} | \beta - p|^2  (t) \simeq  2 \left \langle \frac{\beta}{R^2} , p \right \rangle .
\]
Because $\left |2 \left \langle \frac{\beta}{R^2} , p \right \rangle \right|$ is lower bounded by
$\frac{1}{R^2} ( R- \epsilon) ( R- \eta) $
assuming that the angle between $\beta$ and $p$ is no more than $45$ degrees (or more than $135$), 
the first part of the result now follows if, the angle between $\beta$ and $p$ is no more than $45$ degrees (or more than $135$), $\epsilon \ll R$,  
\[ 
\frac{6\epsilon}{R} + \delta (R+\eta) \leq \frac{1}{R^2} ( R- \epsilon) ( R- \rho_{\max}) . 
\]

For the second part, note that there is a one-to-one correspondence between pairs of critical points of the distance function to a fixed point on an offset and the critical points of the distance function to the same fixed point on the curve $\beta$ itself. 
{Here minima of $\beta$ correspond to a pair of a minimum and a critical point of index $l$ on $\M$, while maxima correspond to a pair of a maximum and a critical point of index $1$.}
\end{proof} 

\subsection{Proof of Corollary~\ref{Lem:Step4}}

\begin{proof}
The only thing in this corollary that requires an extra argument on top of Lemma \ref{Lem:Step3}, is the correspondence of the critical points with the births and deaths respectively: because, by Remark \ref{Rem:ThetaEqZeroOrPi}  there are no Morse critical points unless $\theta \simeq 0$ or  $\theta \simeq \pi$, we know that $B(p, r) \cap B$ (respectively $B(p, r) \cap \M$) with $r \simeq R$ consists of $n$ topological line segments (topological cylinder segments $\mathbb{S}^{l+1} \times [0,1]$). While for $r > 2 R + \eta +\epsilon$, the set  $B(p, r) \cap B$ (respectively $B(p, r) \cap \M$) consists of a topological circle or knot (its offset respectively). See Figure \ref{fig:Step4}.  The only way this can be achieved with the number of critical points we found in Lemma \ref{Lem:Step3} is if the births and deaths occur as described in the statement of the corollary,  by a simple counting argument or the pigeonhole principle. 
We distinguish the two cases, namely $B$ and $\M$: For $B$ we have $n$ minima and $n$ maxima, and all $n$ minima are needed to create the $n$ topological line segments (which we know exist if $r \simeq R$).  Similarly, all $n$ maxima are needed to create the handle attachment that recovers the circle (which we know exists if $r > 2 R + \eta +\epsilon$). For $\M$,  we have  $n$ maxima and $n$ minima, as well as $n$ saddle points of index $l$ and $n$ of index $1$. We need $n$ minima and $n$ saddle points of index $l$ to create the $n$ topological cylinder segments $\mathbb{S}^{l+1} \times [0,1]$ (which we know exist if $r \simeq R$). To form $\mathbb{S}^{l+1} \times \mathbb{S}^1$ we need to connect these segments to each other, for which we need all the critical points of index $1$, so that we end up with $\mathbb{S}^{l+1} \times \mathbb{S}^1$ with $n$ punctures. We need all the maxima to fill all the punctures. 

\end{proof}

\end{document}